\theoremstyle{plain}
\newtheorem{theorem}{Theorem}[section]
\newtheorem{lemma}[theorem]{Lemma}
\newtheorem{proposition}[theorem]{Proposition}
\newtheorem{corollary}[theorem]{Corollary}
\theoremstyle{definition}
\newtheorem{notation}[theorem]{Notation}
\newtheorem{example}[theorem]{Example}
\newtheorem{definition}[theorem]{Definition}
\theoremstyle{remark}
\newtheorem{remark}[theorem]{Remark}
\def\au{\mathcal{A}}
\def\auf{(Q, q_0, q_0, \mathcal{F})}
\begin{document}
\title[Syntactic Complexity of CSFA]{Syntactic Complexity of Circular Semi-Flower Automata}

\author[S. N. Singh]{Shubh Narayan Singh}
\address{Department of Mathematics, Central University of Bihar, Patna, India}
\email{shubh@cub.ac.in}
\author[K. V. Krishna]{K. V. Krishna}
\address{Department of Mathematics, IIT Guwahati, Guwahati, India}
\email{kvk@iitg.ac.in}


\begin{abstract}

We investigate the syntactic complexity of certain types of finitely generated submonoids of a free monoid. In fact, we consider those submonoids which are accepted by circular semi-flower automata (CSFA). Here, we show that the syntactic complexity of CSFA with at most one `branch point going in' (bpi) is linear. Further, we prove that the syntactic complexity of $n$-state CSFA with two bpis over a binary alphabet is $2n(n+1)$.

\end{abstract}

\subjclass[]{68Q70, 68Q45, 20M35}

\keywords{syntactic monoids, transition monoids, semi-flower automata}

\maketitle

\section*{Introduction}
The syntactic complexity of a recognizable language is the cardinality of its syntactic monoid. Further, the syntactic complexity of a class of recognizable languages is the maximal syntactic complexity of languages in that class, taken as a function of the state complexity of these languages. The syntactic complexity of a class of automata is considered to be the syntactic complexity of the class of languages accepted by the automata.  The syntactic complexity of recognizable languages has received more attention in recent years.

In \cite{maslov70}, Maslov observed that $n^n$ is a tight upper bound on the size of the monoid of $n$-state complete and deterministic automata.
Holzer and K\"{o}nig studied the syntactic complexity of unary and binary recognizable languages \cite{holzer04}. For instance, they showed that the syntactic complexity of unary recognizable languages is linear. Also, they proved that if the size of alphabet is at least three, then the syntactic complexity is reached to the maximal size $n^n$. It turns out that the most crucial case is to determine the syntactic complexity of recognizable languages over a binary alphabet. In the binary alphabet case, Holzer and K\"{o}nig  have investigated on the maximal size among all monoids generated by two transformations, where one is a permutation with a single cycle and the other is a non-bijective transformation.

Brzozowski \emph{et al.} investigated the syntactic complexity of various classes of recognizable languages (e.g. \cite{brzowski12tcs,brzowski11,brzowski12dcfs,brzowski12}). Beaudry and Holzer studied the syntactic complexity of reversible deterministic automata \cite{beaudry11}. The syntactic complexity is also studied in \cite{rigo12, rigo11}.

In this work, we restrict the work of Holzer and K\"{o}nig  in \cite{holzer04} to the case of monoids generated by two transformations in which one is a circular permutation and the other is a special type of non-bijective transformation. In particular, we focus on the syntactic complexity of a class of submonoids generated by finite prefix sets of words over a binary alphabet. In this connection, we consider those submonoids which are accepted by circular semi-flower automata.

Semi-flower automata (SFA) have been introduced to study the finitely generated submonoids of a free monoid \cite{giam07,shubh12}. Using SFA, the rank and intersection problem of certain submonids of a free monoid have been investigated \cite{giam08,singh11,singh12}. The circular automata have been studied in various contexts. For instance, the \v{C}ern\'{y} conjecture has been verified for circular automata  \cite{dubuc98, pin78}. Recently, Singh and Krishna have studied the holonomy decomposition of circular SFA \cite{singh13}.

In this paper, we consider circular SFA classified by their bpi(s) -- branch point(s) going in -- and obtain the syntactic complexity of circular SFA. Other than this introduction, the paper has been organized into five sections. In Section 1, we present some preliminary concepts and results that are used in this work. We obtain some necessary properties of circular SFA in Section 2. Our investigations on the syntactic complexity of circular SFA have been presented in sections 3 and 4. Finally, Section 5 concludes the paper.

\section{Preliminaries}

In this section, we provide the necessary background material which shall be useful in this work from \cite{berstel85,lawson04,shubh12}.

We fix our notation regarding functions. We write the argument of a function $\alpha : P \longrightarrow P$ on its left so that $p\alpha$ is the value of the function $\alpha$ at the argument $p$. The composition of functions is designated by concatenation, with the leftmost function understood to apply first so that $p(\alpha \beta) = (p\alpha)\beta$. The function $\alpha$ is said to be \emph{idempotent} if $\alpha^2 = \alpha$. The \emph{rank} of $\alpha$, denoted by $\mbox{rank}(\alpha)$, is the cardinality of the image set $P\alpha$.

Let $A$ be a finite set called an \emph{alphabet} with its elements as \emph{letters}. The free monoid over $A$ is denoted by $A^*$ whose elements are called words, and $\varepsilon$ denotes the empty word -- the identity element of $A^*$. A \emph{language} over $A$ is a subset of $A^*$.

An \emph{automaton} $\au$ over an alphabet $A$ is a quadruple $\au = (Q, I, T, \mathcal{F})$, where $Q$ is a finite set called the set of \emph{states}, $I$ and $T$ are subsets of $Q$ called the sets of \emph{initial} and \emph{final} states, respectively, and $\mathcal{F}\subseteq Q\times A\times Q$ called the set of \emph{transitions}. Clearly, by denoting the states as vertices/nodes and the transitions as labeled arcs, an automaton can be represented by a digraph in which initial and final states shall be distinguished appropriately.

A \emph{path} in $\au$ is a finite sequence $(p_0, a_1, p_1), (p_1, a_2, p_2), \ldots, (p_{k-1}, a_k, p_k)$ of consecutive arcs in its digraph. The word $a_1\cdots a_k \in A^*$ is the \emph{label} of the path. A \emph{null path} is a path from a state to itself labeled by $\varepsilon$. A path that starts and ends at the same state is called as a \emph{cycle}, if it is not a null path. The \emph{language accepted by $\au$}, denoted by $L(\au)$, is the set of words in $A^*$ that are the labels of the paths from an initial state to a final state. A language is \emph{recognizable} if it is accepted by an automaton.

A state $q$ of $\au$ is called a \emph{branch point going in}, in short \emph{bpi}, if the number of transitions coming into $q$ (i.e. the indegree of $q$ -- the number of arcs coming into $q$ -- in the digraph of $\au$) is at least two. We write $BPI(\au)$ to denote the set of all bpis of $\au$.
A state $q$ of $\au$ is \emph{accessible} (respectively, \emph{coaccessible}) if there is a path from an initial state to $q$ (respectively, a path from $q$ to a final state). An automaton is said to be \emph{trim} if all the states of the automaton are accessible and coaccessible. An automaton is said to be \emph{deterministic} if it has a unique initial state and there is at most one transition defined for a state and an input letter. If there is at least one transition defined for a state and an input letter in an automaton, then we say that the automaton is \emph{complete}.

An automaton is called a \emph{semi-flower automaton} (in short, SFA) if it is a trim automaton with a unique initial state that is equal to a unique final state such that all the cycles visit the unique initial-final state. If an automaton $\au = (Q, I, T, \mathcal{F})$ is an SFA, we denote the initial-final state by $q_0$. In which case, we simply write $\au = (Q, q_0, q_0, \mathcal{F})$. An SFA accepts a finitely generated submonoid of the free monoid over the underlying alphabet, and vice versa. Moreover, if an SFA is deterministic, it accepts the submonoid generated by a finite prefix set.

Let $\au = (Q, q_0, T, \mathcal{F})$ be a complete and deterministic automaton over $A$. As there is a unique transition defined over a state and a letter in $\au$, each $a \in A$ induces a function \[\overline{a}: Q \longrightarrow Q\] defined by $q\overline{a} = p$, where $(q, a, p) \in \mathcal{F}$. This phenomenon can be naturally extended to the words in $A^*$. For $x \in A^*$, the function induced by $x$, written $\overline{x} : Q \longrightarrow Q$ is defined inductively as follows. For $q \in Q$, we define $q \overline{\varepsilon} = q$ and, for $u \in A^*$ and $a \in A$, $q(\overline{au}) = (q\overline{a})\overline{u}$.
The set of functions $M(\au) = \{\overline{x} \mid x \in A^*\}$ forms a monoid under the composition of functions, called the \emph{monoid} of $\au$. Note that $M(\au)$ is finite and generated by the functions induced by the letters of $A$. Now, we recall that the automaton $\au$ is \emph{minimal} if and only if $\au$ is accessible and the equivalence relation $\sim_{\au}$ on the state set $Q$ defined by \[p \sim_{\au} q \;\;\mbox{ if and only if }\; \forall x \in A^* \;(p \overline x \in T \Longleftrightarrow q \overline x \in T)\] is the diagonal relation. A minimal automaton for a recognizable language is unique up to isomorphism. The number of states in the minimal automaton of a recognizable language $L$ is called the \emph{state complexity} of $L$.

Let $L$ be a language over an alphabet $A$. The \emph{syntactic congruence} of $L$ is the congruence $\sim_L$ over $A^*$ defined  by
\[u \sim_L v \;\; \mbox{ if and only if }\; \forall x,y \in A^*(xuy \in L \Longleftrightarrow xvy \in L).\]
The quotient monoid $A^*\!/_{\sim_L}$ is called the \emph{syntactic monoid} of $L$. It is well known that $L$ is recognizable if and only if the syntactic monoid of $L$ is finite. Let $L$ be a recognizable language. The \emph{syntactic complexity} of $L$ is defined as the size of the syntactic monoid of $L$. Further, the \emph{syntactic complexity} of a class of recognizable languages is the maximal syntactic complexity of languages in that class, taken as a function of the state complexity of these languages. It is also known that the syntactic monoid of $L$ is isomorphic to the monoid of its minimal automaton. Thus, in order to compute the syntactic complexity of $L$, it is convenient to consider the monoid of its minimal automaton.

We now present the notion of group actions and its related concepts which are useful in this work. For more details, one may refer to any book on basic abstract algebra (e.g.  \cite{foote04}). Let $(H, \circ)$ be a group with identity $e$ and $X$ a nonempty set. A \emph{group action} of $H$ on $X$ is a function $\cdot : X \times H \longrightarrow X$ satisfying the following axioms. For $x \in X$ and $h, h' \in H$, \[x \cdot e = x \;\;\mbox{and}\;\; x\cdot (h\circ h') = (x\cdot h)\cdot h'.\]
For $x \in X$, the \emph{orbit of $x$}, denoted by $\mathcal{O}(x)$,   is the equivalence class of $x$ with respect to the equivalence relation $\sim$ on $X$ defined by \[x \sim y \;\Longleftrightarrow\;  x\cdot h = y \;\;\mbox{for some}\;\; h\in H.\] Clearly, $\mathcal{O}(x) = \{x\cdot h\;|\;h \in H\}$. Further, for $x \in X$, the \emph{stabilizer of $x$}, denoted by $H_x$, is the subgroup of $H$ defined by \[H_x = \{h \in H\;|\; x\cdot h = x\}.\]
It can be observed that, for $x \in X$, $|\mathcal{O}(x)| = [H, H_x]$, the index of $H_x$ in $H$.

\section{Circular Semi-Flower Automata}

In this section, we recall some necessary properties of circular semi-flower automata (CSFA) from \cite{singh13} and prove further properties which are useful in determining the syntactic complexity of CSFA. In this paper, all the automata are complete and deterministic.

Let $X = \{p_1, \ldots, p_m\}$ be a nonempty finite set. A function $\alpha$ on $X$ is said to be a \emph{circular permutation} on $X$ if there is a cyclic ordering, say $p_{i_1}, \ldots, p_{i_m}$, on the elements of $X$, i.e. \[p_{i_j}\alpha = p_{i_{j+1}}, \mbox{ for $1 \le j < m$, and }\; p_{i_m}\alpha = p_{i_1}.\]
An automaton $\au$ over $A$ is said to be a \emph{circular automaton} if there exists $a \in A$ such that the function $\overline{a}$ is a circular permutation on the state set of $\au$. Now we recall the following result.

\begin{theorem} [\cite{singh13}] \label{c3.l.ucp}
Let $\au$ be an SFA over $A$.
\begin{enumerate}
\item[(i)] For $a \in A$, if $\overline{a}$ is a permutation on $Q$, then $\overline{a}$ is a circular permutation.
\item[(ii)] For $a, b \in A$, if $\overline{a}$ and $\overline{b}$ are permutations on $Q$, then $\overline{a} = \overline{b}$.
\item[(iii)] $BPI(\au) = \varnothing$ if and only if $|A| = 1$.
\end{enumerate}
\end{theorem}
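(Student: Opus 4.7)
The three parts rely on the same SFA property---every non-null cycle in the digraph visits $q_0$---combined with elementary facts about permutations. My plan is to tackle (i) first and then use it inside the proofs of (ii) and (iii).

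For (i), I would consider the disjoint cycle decomposition of the permutation $\overline{a}$ on $Q$. Each such cycle manifests as a non-null cycle in the digraph of $\au$ labelled entirely by $a$; any cycle that missed $q_0$ would contradict the SFA hypothesis. Hence $\overline{a}$ has a unique cycle (the one containing $q_0$), and because $\overline{a}$ is a permutation this cycle must exhaust $Q$, so $\overline{a}$ is a circular permutation. For (iii), the same idea handles the implication $|A|=1 \Rightarrow BPI(\au)=\varnothing$: with a single letter $a$, trimness plus the SFA condition force the orbit of $q_0$ under $\overline{a}$ to close into a cycle through $q_0$ that covers all of $Q$, making $\overline{a}$ a circular permutation and giving every state indegree exactly one. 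The converse direction is a double count: every state has outdegree $|A|$ (completeness plus determinism) but indegree at most one when $BPI(\au)=\varnothing$, so summing over states yields $n|A|\le n$, which forces $|A|=1$.

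For (ii), which I expect to be the main obstacle, I would combine (i) with a topological argument. By (i), both $\overline{a}$ and $\overline{b}$ are circular permutations of $Q$, giving two Hamiltonian directed cycles through $q_0$. Deleting $q_0$ from the digraph leaves two directed Hamiltonian paths on $Q\setminus\{q_0\}$ (the tails $q_0\overline{a}, q_0\overline{a}^2,\ldots, q_0\overline{a}^{n-1}$ and similarly for $\overline{b}$), and the SFA hypothesis forces the induced subgraph on $Q\setminus\{q_0\}$ to be acyclic. The delicate step is to argue that a DAG cannot contain two distinct directed Hamiltonian paths on the same vertex set: a Hamiltonian path uniquely determines a topological ordering of its endpoints, so two different Hamiltonian paths would demand two incompatible topological orderings. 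This forces the two paths to coincide, which in turn forces $\overline{a}=\overline{b}$; the small cases $n\le 2$ are trivial to check separately. The routine verification that distinct circular permutations really do produce distinct Hamiltonian tails (and hence a forbidden cycle in $Q\setminus\{q_0\}$) is the part where I would be most careful.
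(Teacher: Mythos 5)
Your proposal is correct, but note that the paper itself offers nothing to compare it against: Theorem~\ref{c3.l.ucp} is recalled from \cite{singh13} and no proof is reproduced here, so your argument has to stand on its own --- and it does. Part (i) via the disjoint cycle decomposition is exactly the natural argument: every cycle of the permutation $\overline{a}$, including any fixed point, is a non-null cycle of the digraph and hence must contain $q_0$, so there is only one cycle and it exhausts $Q$. Part (iii) is also sound in both directions; the double count works because the indegree of a state is defined as the number of transitions $(p,x,q)\in\mathcal{F}$ entering it, so parallel arcs with different labels are counted separately and the sum of indegrees equals $|\mathcal{F}|=n|A|$, while $BPI(\au)=\varnothing$ bounds that sum by $n$. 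For part (ii), the key step --- that a DAG admits at most one directed Hamiltonian path, since such a path forces the unique topological order --- is a genuine and correct observation, and the induced subgraph on $Q\setminus\{q_0\}$ is indeed acyclic by the SFA condition. Once the two tails $q_0\overline{a},\ldots,q_0\overline{a}^{\,n-1}$ and $q_0\overline{b},\ldots,q_0\overline{b}^{\,n-1}$ coincide as vertex sequences, $\overline{a}=\overline{b}$ follows by reading off the successor of each state (including $q_0$ and the last state of the tail, both of which are handled by the cycle closing at $q_0$). The worry you voice at the end is unnecessary: you do not need to show separately that distinct circular permutations give distinct tails; identical tails immediately give $\overline{a}=\overline{b}$, which is the contrapositive you want. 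The degenerate cases $n\le 2$ are covered by the same argument with an empty or one-vertex path.
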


Unless otherwise stated, in what follows, $\au$ always denotes a CSFA $\auf$ over $A$ such that $|Q| = n$. In view of Theorem \ref{c3.l.ucp}, $\au$ has a unique circular permutation on $Q$, induced by its input letters. For the rest of the paper, we fix the following regarding $\au$. Assume $a \in A$ induces a circular permutation $\overline{a}$ and accordingly \[q_0, q_1, \ldots, q_{n-1}\] is the cyclic ordering on $Q$ with respect to $\overline{a}$. And let $G$ be the submonoid generated by $\overline a$ in the monoid $M(\au)$.

\begin{proposition}
The submonoid $G$ is a cyclic subgroup of order $n$ in $M(\au)$. Further, $G$ contains all the permutations of $M(\au)$.
\end{proposition}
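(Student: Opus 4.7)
\smallskip

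\noindent \textbf{Proof plan.} The plan is to handle the two assertions separately, both by elementary structural arguments using Theorem~\ref{c3.l.ucp}. For the first assertion, I would start from the observation that since $\overline{a}$ is a circular permutation on the $n$-element set $Q$, as an element of the symmetric group on $Q$ it is a single $n$-cycle, hence has order exactly $n$. Therefore $\overline{a}^n$ is the identity function on $Q$, which is the identity element of $M(\au)$, and the submonoid $G = \{\overline{a}^k \mid k \ge 0\}$ coincides with $\{\overline{a}^0, \overline{a}^1, \ldots, \overline{a}^{n-1}\}$. In particular $G$ is closed under inverses (since $\overline{a}^{n-1}$ is the inverse of $\overline{a}$), so $G$ is a cyclic subgroup of $M(\au)$ of order $n$.

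For the second assertion, I would let $\overline{x} \in M(\au)$ be an arbitrary permutation of $Q$ and show $\overline{x} \in G$. Write $x = x_1 x_2 \cdots x_k$ with $x_i \in A$, so that $\overline{x} = \overline{x_1}\,\overline{x_2}\cdots \overline{x_k}$. The key observation is a rank argument: for any functions $\alpha, \beta$ on $Q$ one has $\mathrm{rank}(\alpha\beta) \le \min(\mathrm{rank}(\alpha), \mathrm{rank}(\beta))$, so if the composition $\overline{x_1}\cdots \overline{x_k}$ is a permutation (rank $n$) then each $\overline{x_i}$ must have rank $n$ and hence be a permutation of $Q$. Now Theorem~\ref{c3.l.ucp}(ii) forces each such $\overline{x_i}$ to equal $\overline{a}$ (since $\overline{a}$ is itself a permutation induced by a letter). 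Therefore $\overline{x} = \overline{a}^k \in G$, as required.

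\smallskip

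\noindent \textbf{Main obstacle.} Neither step is technically deep; the only point requiring a moment of care is justifying that in $\overline{x} = \overline{x_1}\cdots\overline{x_k}$ every factor is a permutation, and this is exactly the rank inequality just mentioned. The rest is a direct appeal to parts (i) and (ii) of Theorem~\ref{c3.l.ucp} together with the fact that an $n$-cycle has order $n$.
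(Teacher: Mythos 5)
Your proposal is correct and follows essentially the same route as the paper: observe that the cyclic group generated by an $n$-cycle has order $n$, factor an arbitrary permutation $\overline{x}$ as $\overline{x_1}\cdots\overline{x_k}$ over letters, note each factor is a permutation, and invoke Theorem~\ref{c3.l.ucp}(ii) to conclude each factor equals $\overline{a}$. The only difference is that you explicitly justify via the rank inequality the step the paper dismisses with ``clearly,'' which is a welcome but minor elaboration rather than a different argument.
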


\begin{proof}
It is straightforward to observe that $G$ is a cyclic group of order $n$, because $G$ is the submonoid generated by the circular permutation $\overline{a}$. Now, let $\overline x \in M(\au)$ be a permutation on $Q$ for $x = a_1 a_2 \cdots a_m$ with $a_i \in A$ $(1\le i \le m)$. Then \[\overline x = \overline{a_1 a_2 \cdots a_m} =  \overline a_1  \overline a_2 \cdots \overline a_m.\] Clearly, each function $\overline {a_i}$ is a permutation on $Q$. By Theorem \ref{c3.l.ucp}(ii), we have $\overline a = \overline {a_i}$, for all $i$ $(1\le i \le m)$. This implies that $\overline x = \overline{a^m}$ and consequently $\overline x \in G$.
\end{proof}

\begin{remark}\label{c4.r.paiq}
For $p,q \in  Q$, there exists $\overline x \in G$ such that $p\overline x = q$. Indeed, if $p = q_i$ and $q = q_j$, for some $i$,  $j$ (with $0 \le i \le j < n$), then $\overline x = \overline{a^{j-i}}$ will serve the purpose.
\end{remark}

\begin{proposition}\label{c4.p.csm}
$\au$ is a minimal automaton.
\end{proposition}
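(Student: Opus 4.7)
The plan is to verify the two conditions that characterize minimality recalled in Section~1: accessibility of $\au$, and triviality of the equivalence $\sim_\au$ on $Q$.

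Accessibility comes for free. Because $\au$ is an SFA it is by definition trim, so in particular every state is accessible from the initial state $q_0$.

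For the second condition, recall that in an SFA the unique initial state coincides with the unique final state, so $\mathcal{F} = \{q_0\}$ in our notation. I would pick two arbitrary distinct states $p = q_i$ and $q = q_j$ with $0 \le i, j < n$ and $i \ne j$, and exhibit a word $x \in A^*$ separating them. The natural candidate is a power of $a$: taking $x = a^{n-i}$ we obtain $p\overline{x} = q_i\overline{a^{n-i}} = q_0 \in \mathcal{F}$, using that $\overline{a}$ is the circular permutation sending $q_k \mapsto q_{k+1 \bmod n}$. On the other hand, $q\overline{x} = q_{(j-i) \bmod n}$, and since $j \ne i$ and $0 \le j - i < n$ in absolute value less than $n$, the index $(j - i) \bmod n$ lies in $\{1, \ldots, n-1\}$, so $q\overline{x} \ne q_0$ and hence $q\overline{x} \notin \mathcal{F}$. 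This shows $p \not\sim_\au q$, so $\sim_\au$ is the diagonal relation and $\au$ is minimal.

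There is essentially no obstacle here; the statement is an immediate consequence of the existence of a single circular permutation acting transitively on $Q$ (indeed, Remark~\ref{c4.r.paiq} already records the needed transitivity). The only mild care is the observation that $T = \{q_0\}$ because $\au$ is an SFA, which is what makes a single power of $a$ suffice to push any chosen state to the unique accepting state while moving every other state away from it.
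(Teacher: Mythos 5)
Your proof is correct and follows essentially the same route as the paper: both separate two distinct states by applying a power of $\overline{a}$ that sends one of them to the unique final state $q_0$, and both conclude the other state cannot also land on $q_0$ because that power of $\overline a$ is injective (you verify this by computing indices, the paper by citing that the map lies in the group $G$). The only quibble is notational: the set of final states is $T$, not $\mathcal{F}$ (which denotes the transition relation in this paper), though you use the correct fact $T=\{q_0\}$ in the end.
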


\begin{proof}
Since $\au$ is accessible, it is sufficient to prove that the relation $\sim_{\au}$ is diagonal. Let $p$ and $q$ be two distinct states. By Remark \ref{c4.r.paiq}, there exists $\overline x \in G$ such that $p \overline{x} = q_0$.
Now, we claim that $q \overline{x} \neq q_0$. For, if $q \overline{x} = q_0$, then $p \overline{x} = q\overline{x}$. Since $\overline x \in G$, we have $p = q$; a contradiction. Hence, the relation $\sim_{\au}$ is diagonal and consequently, $\au$ is minimal.
\end{proof}

Let us consider the group action of $G$ on $M(\au)$ with respect to the monoid operation, the composition of functions. Note that \[M(\au) = \displaystyle\bigcup_{ x \in A^*}\mathcal{O}(\overline x).\]

\begin{proposition}\label{c4.p.sizen}
For $x \in A^*$, we have $|\mathcal{O}(\overline x)| = n$.
\end{proposition}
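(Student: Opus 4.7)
The plan is to use the orbit--stabilizer theorem: since $|\mathcal{O}(\overline{x})| = [G : G_{\overline{x}}]$ and $|G| = n$, it suffices to show that the stabilizer $G_{\overline{x}}$ is the trivial subgroup $\{\overline{\varepsilon}\}$. Equivalently, I will show that the $n$ elements $\overline{x}\cdot\overline{a^k} = \overline{x a^k}$, for $0 \le k < n$, are pairwise distinct.

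First I would unwind the group action: for $0 \le k < n$, the element $\overline{x}\cdot\overline{a^k}$ is just the composition $\overline{x}\,\overline{a^k} = \overline{x a^k}$ in $M(\au)$. Suppose $\overline{a^k}$ lies in the stabilizer $G_{\overline{x}}$, i.e.\ $\overline{x a^k} = \overline{x}$ as functions on $Q$. Applying both sides to an arbitrary state $q \in Q$ and writing $p = q\overline{x}$, this gives $p\,\overline{a^k} = p$.

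Here is the key observation: because $\overline{a}$ is a circular permutation on the $n$-element set $Q$, its cyclic order equals $n$, and in fact every $\overline{a}$-orbit on $Q$ has length $n$. Hence for any state $p$, the equation $p\,\overline{a^k} = p$ with $0 \le k < n$ forces $k = 0$. Therefore $G_{\overline{x}} = \{\overline{\varepsilon}\}$, and the orbit--stabilizer theorem gives $|\mathcal{O}(\overline{x})| = [G : G_{\overline{x}}] = |G| = n$.

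I do not anticipate a real obstacle here; the only slightly subtle point is being careful that the stabilizer computation uses the fact, guaranteed by $\overline{a}$ being a single $n$-cycle, that no nontrivial power of $\overline{a}$ fixes any individual state, not merely that no nontrivial power of $\overline{a}$ equals the identity. Once this is observed, the proof is a direct application of orbit--stabilizer together with Theorem~\ref{c3.l.ucp}(i).
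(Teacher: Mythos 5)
Your proof is correct and follows essentially the same route as the paper: orbit--stabilizer plus showing $G_{\overline{x}}$ is trivial because an element of $G$ fixing a point of the image $Q\overline{x}$ must be the identity. The only cosmetic difference is in the last step, where you invoke directly that nontrivial powers of an $n$-cycle are fixed-point-free, while the paper reaches the same conclusion by propagating the fixed point to all of $Q$ using the transitivity and commutativity of $G$.
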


\begin{proof}
For $x \in A^*$, we have $|\mathcal{O}(\overline x)| = [G, G_{\overline x}]$. Since $|G| = n$, it is sufficient to prove that $G_{\overline x} = \{\overline {\varepsilon}\}$. Let $\overline y \in G_{\overline x}$, we have $\overline x \;\overline y = \overline x$. This implies that, for $q \in Q$, $q(\overline x\; \overline y) = q\overline x$, i.e. $(q\overline x) \overline y = q\overline x$. Write $q\overline x =q'$, then $q' \overline y = q'$.

We claim that $\overline y = \overline{\varepsilon}$. Let $p \in Q$ be an arbitrary state. By Remark \ref{c4.r.paiq}, there exists $\overline z \in G$ such that $p = q'\overline z$. Consider
\[p\overline y = (q'\overline z)\overline y = (q'\overline y)\overline z = q'\overline z = p.\]
Hence, $\overline y = \overline{\varepsilon}$ and consequently $G_{\overline x} = \{\overline {\varepsilon}\}$.
\end{proof}

Thus, to compute the syntactic complexity, it is sufficient to count the number of orbits. In the rest of the paper, we investigate the syntactic complexity of CSFA classified by the number of bpis. The following result from \cite{singh13} is useful in the sequel.

\begin{theorem}\label{t.pre-pap}
For $k \ge 1$, if $|BPI(\au)| = k$, then
\begin{enumerate}
\item[(i)] $q_0 \in BPI(\au)$ and
\item[(ii)] any non-permutation in $M(\au)$ has rank at most $k$.
\end{enumerate}
Hence, if $k = 1$, then  $Q\overline b = \{q_0\}$, for all $b \in A\setminus \{a\}$.
\end{theorem}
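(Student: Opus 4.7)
The plan is to prove (i) and (ii) in turn and then derive the concluding ``Hence'' statement. Throughout, the workhorse will be the semi-flower property---every cycle of $\au$ visits $q_0$---together with the fact that the circular permutation $\overline{a}$ contributes an $a$-in-degree of exactly one at every state. For (i), I would first combine $k \ge 1$ with Theorem \ref{c3.l.ucp}(iii) to obtain $|A| \ge 2$, and then fix some $b \in A \setminus \{a\}$ and examine the subgraph of $\au$ consisting only of $b$-arcs. Each state has out-degree exactly one there, so iterating $\overline{b}$ from any starting state must eventually enter a cycle; this cycle will also be a cycle of $\au$ and so must pass through $q_0$ by the semi-flower property. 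Hence $q_0$ will have an incoming $b$-arc, and combined with its $a$-in-degree of one this will give $q_0 \in BPI(\au)$.

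For (ii), I would take a non-permutation $\overline{x}$ and note that $x$ must contain at least one letter $b \ne a$, for otherwise $\overline{x}$ would be a power of $\overline{a}$. Writing $x = a^r b y$ with $r \ge 0$ and $y \in A^*$, the bijectivity of $\overline{a}^r$ yields $Q\overline{x} = Q\overline{b}\,\overline{y}$, so $|Q\overline{x}| \le |Q\overline{b}|$. The key step is then to establish the inclusion $Q\overline{b} \subseteq BPI(\au)$: any $q' \in Q\overline{b}$ admits a $b$-preimage (by definition of the image) and an $a$-preimage (since $\overline{a}$ is a permutation), producing two incoming arcs at $q'$ with distinct labels, hence total in-degree at least two. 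This forces $|Q\overline{x}| \le |Q\overline{b}| \le k$. The concluding statement is then immediate: if $k = 1$, part (i) makes $BPI(\au) = \{q_0\}$, and for each $b \in A \setminus \{a\}$ the nonempty set $Q\overline{b} \subseteq \{q_0\}$ must equal $\{q_0\}$.

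The main obstacle will be the functional-graph step in (i); recognizing that iterating $\overline{b}$ is forced into a cycle and that the semi-flower condition then plants this cycle at $q_0$ is the one place where the structural hypotheses on $\au$ are used nontrivially. Everything else reduces to careful in-degree bookkeeping that leans on $\overline{a}$'s being a permutation to guarantee a uniform $a$-contribution of one to every state's in-degree.
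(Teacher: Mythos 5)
Your proof is correct, but note that the paper does not prove this theorem at all: it is quoted from \cite{singh13} without proof, so there is no in-paper argument to compare against. What you have written is a sound, self-contained derivation. For (i), the chain ``$k\ge 1$ forces $|A|\ge 2$ via Theorem \ref{c3.l.ucp}(iii); the functional graph of $\overline{b}$ must contain a cycle; that cycle is a cycle of $\au$ and hence visits $q_0$ by the semi-flower property; so $q_0$ acquires an incoming $b$-arc on top of its incoming $a$-arc'' is exactly the right use of the structural hypotheses, and the two incoming transitions are distinct elements of $\mathcal{F}$ because their labels differ, which is what the paper's definition of indegree counts. For (ii), the inclusion $Q\overline{b}\subseteq BPI(\au)$ for every $b\ne a$ (each image point has both a $b$-preimage and an $a$-preimage, the latter because $\overline{a}$ is a permutation) combined with $Q\overline{x}=(Q\overline{b})\overline{y}$ after cancelling the leading bijective block $\overline{a}^{\,r}$ gives $\operatorname{rank}(\overline x)\le|Q\overline b|\le k$ cleanly; the case where $\overline b$ happens to be a permutation causes no harm since the inequality chain remains valid. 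The ``Hence'' clause follows as you say from $\varnothing\ne Q\overline b\subseteq BPI(\au)=\{q_0\}$, using completeness. I see no gaps.
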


\section{CSFA with at most one bpi}

In this section, we investigate the syntactic complexity of CSFA with at most one bpi. We first observe that the syntactic complexity of SFA with no bpis follows from the general case of permutation SFA. An automaton is a \emph{permutation automaton} if the function induced by each
input letter is a permutation on the state set \cite{thi68}. By Theorem \ref{c3.l.ucp}(i) and Proposition \ref{c4.p.csm}, any permutation SFA is a minimal automaton. Now, by Theorem \ref{c3.l.ucp}(ii), we have the following proposition which also provides the syntactic complexity of permutation SFA.

\begin{proposition}\label{c4.p.scpa}
If $\au$ is a permutation SFA, then $M(\au)$ is a cyclic group of order $n$. Hence, the syntactic complexity of $\au$ is $n$.
\end{proposition}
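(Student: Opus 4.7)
The plan is to show that every letter of the alphabet induces the same circular permutation on $Q$, so that $M(\au)$ collapses to the cyclic group generated by this common permutation, which has order $n$.

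First I would invoke Theorem \ref{c3.l.ucp}(i): since $\au$ is an SFA whose every letter induces a permutation on $Q$, each $\overline{b}$ (for $b \in A$) is in fact a circular permutation. Then, applying Theorem \ref{c3.l.ucp}(ii) pairwise to the letters of $A$, I conclude that there is a single function $\overline{a}$ on $Q$ such that $\overline{b} = \overline{a}$ for every $b \in A$. Consequently, for any word $x = a_1 a_2 \cdots a_m \in A^*$, we have $\overline{x} = \overline{a_1}\,\overline{a_2}\cdots \overline{a_m} = \overline{a}^{\,m}$, so $M(\au)$ coincides with the submonoid $G$ generated by $\overline{a}$.

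Next I would note that $\overline{a}$ is a circular permutation on an $n$-element set, hence has order exactly $n$ in the symmetric group on $Q$. Thus $G$ is a cyclic group of order $n$, and the previous step gives $M(\au) = G$, yielding the first assertion.

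For the syntactic complexity, I would appeal to Proposition \ref{c4.p.csm} (which applies because, by Theorem \ref{c3.l.ucp}(i), a permutation SFA is a CSFA, as explicitly noted in the paragraph preceding the proposition) to conclude that $\au$ is a minimal automaton. Using the standard fact, already recorded in Section 1, that the syntactic monoid of a recognizable language is isomorphic to the monoid of its minimal automaton, it follows that the syntactic complexity equals $|M(\au)| = n$. I do not anticipate a real obstacle here; the only subtle point is making sure Proposition \ref{c4.p.csm} is legitimately applicable to an arbitrary permutation SFA, which is immediate from Theorem \ref{c3.l.ucp}(i).
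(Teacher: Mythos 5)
Your proof is correct and follows essentially the same route the paper takes: Theorem \ref{c3.l.ucp}(i) and (ii) to show all letters induce one common circular permutation, so $M(\au)$ is the cyclic group of order $n$ it generates, and then minimality (via Proposition \ref{c4.p.csm}) to identify $|M(\au)|$ with the syntactic complexity. The paper leaves this argument implicit in the paragraph preceding the proposition, but your reasoning is exactly what is intended there.
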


Let $\au$ be an SFA with no bpis, then by Theorem \ref{c3.l.ucp}(iii), we have $|A| = 1$, say $A = \{a\}$. Note that the function $\overline a$ is a circular permutation on $Q$. Thus, $\au$ is a circular as well as permutation SFA.  Hence, by Proposition \ref{c4.p.scpa}, we have the following theorem.

\begin{theorem}
The syntactic complexity of SFA with no bpis is $n$.
\end{theorem}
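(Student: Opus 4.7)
The plan is a three-step chain of previously proved results; since $\au$ has no bpis, the theorem is essentially a synthesis of Theorem~\ref{c3.l.ucp} and Proposition~\ref{c4.p.scpa}. First, I would apply Theorem~\ref{c3.l.ucp}(iii) to conclude that the alphabet is a singleton, say $A = \{a\}$, so the monoid $M(\au)$ is generated by a single function $\overline{a}\colon Q \to Q$.

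Next, I would verify that $\overline{a}$ is in fact a permutation. Since $\au$ is complete and deterministic, every state has exactly one outgoing $a$-edge; the absence of bpis means every state has in-degree at most $1$ under $\overline{a}$. Because $Q$ is finite, out-degree $1$ and in-degree at most $1$ everywhere forces $\overline{a}$ to be a bijection. Theorem~\ref{c3.l.ucp}(i) then upgrades $\overline{a}$ to a circular permutation on $Q$, so $\au$ is a permutation SFA.

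Finally, I would invoke Proposition~\ref{c4.p.scpa}, which gives $M(\au) = \langle \overline{a} \rangle$, a cyclic group of order $n$, and hence syntactic complexity $n$. The only mild obstacle is the middle step: one has to confirm that ``complete $+$ deterministic $+$ no bpi'' genuinely forces $\overline{a}$ to be a bijection rather than some non-injective function, but finiteness of $Q$ together with the pigeonhole argument makes this routine.
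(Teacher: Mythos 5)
Your proposal is correct and follows essentially the same route as the paper: reduce to a unary alphabet via Theorem~\ref{c3.l.ucp}(iii), observe that $\overline{a}$ is a circular permutation, and conclude via Proposition~\ref{c4.p.scpa}. The only difference is that you explicitly justify why $\overline{a}$ is a bijection (out-degree one plus in-degree at most one on a finite set), a step the paper simply asserts; that is a harmless and welcome elaboration, not a different argument.
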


Now, we determine the syntactic complexity of CSFA with a unique bpi. If the size of state set $|Q| = 1$, then the CSFA with a unique bpi is a permutation SFA so that its syntactic complexity is $n = 1$. Now, we consider the CSFA with $|Q| > 1$ in the following theorem.

\begin{theorem}
The syntactic complexity of CSFA with a unique bpi is $2n$.
\end{theorem}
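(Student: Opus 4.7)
The plan is to use Theorem \ref{t.pre-pap} to reduce all non-$a$ letters to a single constant function, and then invoke the orbit framework from Proposition \ref{c4.p.sizen} to count elements orbit-by-orbit, rather than computing $|M(\au)|$ directly.

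First I would observe that since $\au$ has a (unique) bpi, Theorem \ref{c3.l.ucp}(iii) gives $|A| \geq 2$, so we may fix some $b \in A \setminus \{a\}$. The last sentence of Theorem \ref{t.pre-pap} (applied with $k=1$) forces $Q\overline{b} = \{q_0\}$, and this holds for \emph{every} letter of $A \setminus \{a\}$. Hence every non-$a$ letter induces the same function, namely the constant map $c_0 : q \mapsto q_0$. In particular, $M(\au)$ is generated by the circular permutation $\overline{a}$ together with the single idempotent $c_0$.

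Next I would classify the non-permutation elements of $M(\au)$. Any $x \in A^*$ whose induced function $\overline{x}$ is not a permutation must contain at least one letter from $A \setminus \{a\}$, since otherwise $\overline{x}$ lies in the cyclic group $G = \langle \overline{a}\rangle$. Writing $x = y\, b\, a^k$ with $b$ the \emph{last} non-$a$ letter in $x$ and $k \geq 0$, a direct computation using $Q\overline{b} = \{q_0\}$ and $q_0 \overline{a^k} = q_k$ shows $q\overline{x} = q_k$ for every $q \in Q$; that is, $\overline{x} = c_k$, the constant function mapping $Q$ to $q_k$.

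The orbit count then finishes the argument. By Proposition \ref{c4.p.sizen}, every $G$-orbit in $M(\au)$ has exactly $n$ elements. The orbit of $\overline{\varepsilon}$ is all of $G$, giving $n$ permutations. The orbit of $c_0$ consists of the maps $c_0 \cdot \overline{a^k}$ with $q(c_0 \overline{a^k}) = q_0 \overline{a^k} = q_k$, so this orbit is precisely $\{c_0, c_1, \ldots, c_{n-1}\}$. Since every non-permutation of $M(\au)$ lies in this second orbit by the previous paragraph, there are exactly two orbits and so $|M(\au)| = 2n$.

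I do not foresee a genuine obstacle; the only mildly delicate points are that the $c_k$ are pairwise distinct (obvious from their images) and that no $c_k$ is a permutation (immediate since each $c_k$ has rank $1$ while $n > 1$). These verifications confirm that the two orbits are disjoint and that the syntactic complexity equals $2n$.
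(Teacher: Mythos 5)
Your proposal is correct and follows essentially the same route as the paper: reduce all non-$a$ letters to the constant map onto $q_0$ via Theorem \ref{t.pre-pap}, show every non-permutation is a constant function lying in the orbit of that map, and conclude $|M(\au)|=2n$ from two orbits of size $n$ via Proposition \ref{c4.p.sizen}. The only cosmetic difference is that you verify the constancy of non-permutations by factoring the word at its last non-$a$ letter, where the paper simply cites the rank bound in Theorem \ref{t.pre-pap}(ii).
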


\begin{proof}
Let $\au$ be a CSFA with a unique bpi. By Theorem \ref{t.pre-pap}, we have $Q\overline b = \{q_0\}$, for all $b \in A\setminus \{a\}$. This implies that for $b, c \in A\setminus \{a\}$, we have $\overline b = \overline c$. Now, we take a letter $b \in A\setminus \{a\}$. The orbit of $\overline b$ is \[\mathcal{O}(\overline b) = \{\overline{ba^i}\;|\; 1 \leq i \leq n\}.\]

Let $\overline x$ be a non-permutation in $M(\au)$. By Theorem \ref{t.pre-pap}(ii), $\overline x$ is a constant function. This implies that $Q \overline x = \{q_k\}$, for some $k$ (with $0\le k < n$). Note that $Q \overline{ba^k} = \{q_k\}$. Therefore, $\overline x = \overline{ba^k} \in \mathcal{O}(\overline b)$ and consequently the orbit $\mathcal{O}(\overline b)$ contains all non-permutations in $M(\au)$.

Thus, there are exactly two distinct orbits, one with all permutations (i.e. $G$) and other with all non-permutations. By Proposition \ref{c4.p.sizen},
we have $|M(\au)| = 2n$. Since $\au$ is arbitrary, we have the syntactic complexity of the submonoids accepted by  CSFA with a unique bpi is $2n$.
\end{proof}

\section{CSFA with two bpis}

In this section, we investigate the syntactic complexity of CSFA with two bpis. In the previous section, we have observed that the syntactic complexity of CSFA with at most one bpi is independent of the size of the input alphabet. In contrast, the syntactic complexity of CSFA with two bpis varies with respect to the size of input alphabet (cf. Example \ref{c4.e.2and3}). Hence, in this section, we restrict ourselves to investigate the syntactic complexity of CSFA with two bpis over a binary alphabet. First observe that, if $|Q| = 2$, then the CSFA under consideration are indeed permutation SFA so that their syntactic complexity is $n = 2$. In this section, we consider the CSFA with $|Q| > 2$ and prove the following main theorem.

\begin{theorem}\label{c4.t.mainres}
The syntactic complexity of CSFA with two bpis over a binary alphabet is $2n(n+1)$.
\end{theorem}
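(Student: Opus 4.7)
The plan is to bound $|M(\mathcal{A})|$ from above by parameterizing its non-permutation elements in a normal form, and then to exhibit a CSFA that realizes the bound.

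Write $A = \{a, b\}$ with $\overline{a}$ the circular shift. By Theorem~\ref{t.pre-pap}, $\overline{b}$ has rank at most $2$, and since both bpis lie in $Q\overline{b}$, necessarily $Q\overline{b} = \{q_0, q_k\}$ for some $k$ with $1 \le k \le n-1$. Let $B_0 = \overline{b}^{-1}(q_0)$, $B_k = \overline{b}^{-1}(q_k)$, and encode the fiber partition by $c : \mathbb{Z}/n \to \{0, k\}$ with $c(i) = 0$ iff $q_i \in B_0$. An induction on $m$ shows that for every word $w = a^{i_0} b a^{i_1} b \cdots b a^{i_m}$ with $m \ge 1$, one has
\[
\overline{w}(q_t) = q_{T_{c(t + i_0)}}
\]
with targets $T_0, T_k \in \mathbb{Z}/n$ satisfying $\{T_0, T_k\} \subseteq \{i_m, k + i_m\}$: each intermediate $\overline{b}$ forces the current state into $\{q_0, q_k\}$, and the final $\overline{a}^{i_m}$ shifts it by $i_m$. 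Consequently the ordered pair $(T_0, T_k)$ is necessarily one of $(l, l+k)$, $(l+k, l)$, or $(l, l)$ for some $l \in \mathbb{Z}/n$.

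For the upper bound, the rank-$2$ parameter triples $(i_0, T_0, T_k)$ are bounded by $n \cdot 2n = 2n^{2}$, rank-$1$ (constant) elements by $n$, and permutations form the cyclic group $G$ of order $n$. Thus $|M(\mathcal{A})| \le 2n^{2} + 2n = 2n(n+1)$.

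For the matching lower bound, construct $\mathcal{A}$ with $k = n-1$ and $\overline{b}$ defined by $B_{n-1} = \{q_{n-2}\}$, $B_0 = Q \setminus \{q_{n-2}\}$ (valid for $n \ge 3$): removing $q_0$ leaves the DAG $q_1 \to q_2 \to \cdots \to q_{n-1}$ together with the extra arc $q_{n-2} \to q_{n-1}$, so $\mathcal{A}$ is an SFA with precisely the two bpis $q_0$ and $q_{n-1}$. Since $B_0 = Q \setminus \{q_{n-2}\}$ its shift stabilizer is trivial, and since $|B_0| = n-1 \ne 1 = |B_{n-1}|$, no shift of $B_0$ equals any shift of $B_{n-1}$; these two facts preclude all coincidences among the $2n^{2}$ rank-$2$ parameter triples. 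Each parameter is realized by an explicit word: the non-swapped pair $(l, l+k) = (l, l-1)$ by $w = a^{i_0} b a^{l}$; the swapped pair $(l+k, l) = (l-1, l)$ by $w = a^{i_0} b a^{n-2} b a^{l}$ (using $c(n-2) = n-1$ and $c(n-3) = 0$); and the constant $q_l$ by $w = a^{i_0} b b a^{l}$ (using $c(0) = c(n-1) = 0$). This yields exactly $2n^{2} + n + n = 2n(n+1)$ elements.

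The main obstacle is the distinctness analysis underlying the lower bound: two parameter triples $(i_0, T_0, T_k)$ and $(i'_0, T'_0, T'_k)$ (both rank-$2$) yield the same function precisely when either $i_0 - i'_0 \in \mathrm{Stab}(B_0)$ and $(T_0, T_k) = (T'_0, T'_k)$, or $B_0 = B_k + (i_0 - i'_0)$ and $(T_0, T_k) = (T'_k, T'_0)$. Both the stabilizer-collision and the swap-collision routes to over-counting must be sealed off simultaneously; in the construction, the asymmetric fiber sizes $|B_0| = n-1$ and $|B_{n-1}| = 1$ handle both uniformly.
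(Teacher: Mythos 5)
Your proof is correct, but it follows a genuinely different route from the paper's. The paper proceeds by (1) showing every $G$-orbit in $M(\mathcal{A})$ under right composition has size exactly $n$, (2) constructing a set $\mathfrak{B}$ of at most $2(n+1)$ \emph{basic idempotents} (via the $\kappa$ and $\tau$ lemmas and a normal-form theorem stating that every rank-two element is $\overline{a^iba^j}$, $\overline{a^ib^2a^j}$ or $\overline{a^iba^\tau ba^j}$), and (3) proving $M(\mathcal{A})=\mathfrak{B}G$, whence $|M(\mathcal{A})|\le 2(n+1)\cdot n$; the matching example takes $\overline{b}$ sending $q_0\mapsto q_1$ and all else to $q_0$, with a case-by-case verification that the orbits of the $2n$ rank-two basic idempotents are pairwise disjoint. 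You instead parameterize every non-permutation directly by the triple $(i_0,T_0,T_k)$ (initial shift, ordered pair of targets constrained to $\{l,l+k\}$), which yields the bound $2n^2+n+n$ without any idempotent machinery or orbit-size computation; your witness automaton is also different (bpis $q_0$ and $q_{n-1}$, with $q_0\overline{b}=q_0$, so it falls in the paper's ``$\tau$'' case rather than the case of its own example), and your distinctness argument via the trivial shift-stabilizer of $B_0$ and the unequal fiber sizes $|B_0|\ne|B_k|$ is cleaner and more conceptual than the paper's explicit orbit-intersection computations. The two counts are secretly the same --- your $n$ values of $i_0$ times $2$ swap-states correspond to the paper's $2n$ rank-two basic idempotents $\overline{a^i(a^\kappa b)a^{n-i}}$ and $\overline{a^i(\cdot)^2a^{n-i}}$, and your target parameter $l$ to the paper's right $G$-orbit --- but your execution buys a shorter, more elementary proof; what the paper's approach buys is the explicit idempotent normal form $M(\mathcal{A})=\mathfrak{B}G$, which is of independent structural interest. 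Two small points you should make explicit: the standing hypothesis $n\ge 3$ (for $n=2$ the automaton is a permutation SFA), and the fact that the two families $(l,l+k)$ and $(l+k,l)$ of target pairs are disjoint in your witness because $2k=2(n-1)\not\equiv 0 \pmod n$ for $n\ge 3$, so that the realized triples really number $2n^2$.
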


We fix the following notation for rest of the section. Let $\au$ be a CSFA with two bpis over the binary alphabet $A = \{a,b\}$. As earlier, $\overline a$ is the circular permutation. Note that, for the non-permutation $\overline b$, we have $Q\overline b = BPI(\au)$. By Theorem \ref{t.pre-pap}(i), the initial-final state $q_0$ is a bpi. Let $q_m$, for some $m$ (with $1 \le m < n$), be the other bpi of $\au$ so that $BPI(\au) = \{q_0, q_m\}$. We need to establish some results for proving Theorem \ref{c4.t.mainres}. In the following, these results are presented in various subsections.

\subsection{Idempotents}

In this subsection, we obtain the idempotents of $M(\au)$ which will be useful to give a representation of the elements of $M(\au)$. In view of Theorem \ref{t.pre-pap}(ii), for $x \in A^*$, we have $\mbox{rank}(\overline x) \in \{1, 2, n\}$.
Clearly, the identity element $\overline{\varepsilon}$ in $M(\au)$ is only idempotent of rank $n$. All the elements of rank one in $M(\au)$ are idempotent, provided that they exist. We now estimate idempotents of rank two in $M(\au)$. For that, we first prove the following results.

\begin{remark}\label{c4.p.ndie}
For $1 \leq i \leq n$ and $x \in A^*$, if $\overline x$ is an idempotent in $M(\au)$, then
$\overline{a^i x a^{n-i}}$ is also an idempotent in $M(\au)$. For instance, \[(\overline{a^i x a^{n-i}})^2 = (\overline{a^i x a^{n-i}}) (\overline{a^i x a^{n-i}}) = \overline{a^i x^2 a^{n-i}} = \overline{a^i x a^{n-i}}.\]
\end{remark}

\begin{remark}\label{c4.r.q0qm-idem}
Let $\overline x$ be an element in $M(\au)$ such that $Q\overline x = \{q_0, q_m\}$. If $q_0\overline x = q_0$ and $q_m \overline x = q_m$, then $\overline x$ is an idempotent.
\end{remark}

\begin{proposition}\label{c4.p.number}
Let $t$ be a natural number such that $t < m < n$; there exists a natural number $k$ such that $m \le t + k(n-m) < n$.
\end{proposition}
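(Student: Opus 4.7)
The plan is to exploit the fact that the arithmetic progression $t, t+(n-m), t+2(n-m), \ldots$ has common difference $n-m$, which is exactly the length of the target window $[m, n)$, so the progression cannot ``overshoot'' the window.

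More concretely, I would invoke the well-ordering principle: let $k$ be the least nonnegative integer such that $t + k(n-m) \ge m$. Such a $k$ exists since $n - m \ge 1$, so the sequence $t + k(n-m)$ is unbounded above. Because the hypothesis $t < m$ rules out $k = 0$, we have $k \ge 1$, and hence $k - 1$ is a valid nonnegative integer.

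By minimality of $k$, the value at $k-1$ fails the lower bound, i.e. $t + (k-1)(n-m) < m$. Adding $n - m$ to both sides gives
\[t + k(n-m) < m + (n-m) = n,\]
while the defining property of $k$ gives $t + k(n-m) \ge m$. Together these yield $m \le t + k(n-m) < n$, as required.

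There is really no serious obstacle here; the only thing one has to get right is the observation that the step size of the progression equals the length of the interval, which is the reason no intermediate value is skipped. I would present the argument in a couple of lines, essentially as above.
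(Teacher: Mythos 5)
Your proof is correct and is essentially the same as the paper's: both take the least $k$ with $t+k(n-m)\ge m$ and use minimality (i.e.\ $t+(k-1)(n-m)<m$) to conclude $t+k(n-m)<n$. Your version of the final step (adding $n-m$ to both sides) is a slightly cleaner rendering of the same algebra, and you are a bit more careful in noting explicitly that $k\ge 1$.
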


\begin{proof}
Since $n-m > 0$, note that the sequence $\{t + i(n-m)\}_{i = 0,1,2,\ldots}$ is an increasing sequence. Let $k$ be the least number such that $m\le t + k(n-m)$. We prove that $t + k(n-m)< n$. Since $k$ is least, we have $t + (k-1)(n-m)< m$. This implies that \[t + (k-1)n - km < 0.\]
Now, we have $t + k(n-m) = t + (k-1)n -km + n < n$.
\end{proof}

\begin{lemma}\label{c4.l.akb}
There exists a natural number $r$ (with $1 \leq r < n$) such that the function $\overline{a^r b}$ is an idempotent of rank two in $M(\au)$.
\end{lemma}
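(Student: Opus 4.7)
The plan is to exhibit a concrete $r$ and verify idempotency via Remark~\ref{c4.r.q0qm-idem}. For every $r$ we have $Q\overline{a^r b} = Q\overline{b} = \{q_0, q_m\}$, so any function of the form $\overline{a^r b}$ automatically has rank two; what remains is idempotency. By Remark~\ref{c4.r.q0qm-idem}, it suffices to arrange the two fixed-point equations $q_0 \overline{a^r b} = q_0$ and $q_m \overline{a^r b} = q_m$. Since $\overline{a}$ is the circular permutation $q_i \mapsto q_{(i+1) \bmod n}$, these translate into the simultaneous requirements $q_r \overline{b} = q_0$ and $q_{(m+r) \bmod n} \overline{b} = q_m$.

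To produce such an $r$, I first analyse where $\overline{b}$ sends things. Let $J = \{\, j : q_j \overline{b} = q_m \,\}$. The main structural claim is that $J$ is a non-empty subset of $\{0, 1, \ldots, m-1\}$. Non-emptiness holds because $q_m$ is a bpi whose only $\overline{a}$-preimage is $q_{m-1}$, so at least one $\overline{b}$-arc must land on $q_m$. For the upper bound I would invoke the semi-flower hypothesis that every cycle visits $q_0$: were some $j \in \{m, m+1, \ldots, n-1\}$ in $J$, then the $\overline{a}$-path $q_m \to q_{m+1} \to \cdots \to q_j$ followed by the $\overline{b}$-arc $q_j \to q_m$ would be a cycle avoiding $q_0$ (the degenerate case $j = m$ being a self-loop at $q_m$), contradicting that $\au$ is an SFA.

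With this in hand, I set $r = \max(J) + (n - m)$. Since $0 \le \max(J) \le m - 1$ and $n > m$, this places $r$ in $[\,n - m,\; n - 1\,]$, so in particular $1 \le r < n$. One then checks $(m + r) \bmod n = \max(J) \in J$, so $q_{(m+r) \bmod n}\, \overline{b} = q_m$. For the other requirement $r \notin J$, a short case split suffices: if $r \ge m$ then $r \notin \{0, \ldots, m-1\} \supseteq J$, while if $r < m$ then $r > \max(J)$ forces $r \notin J$ by definition of the maximum. Either way $q_r\, \overline{b} = q_0$, and Remark~\ref{c4.r.q0qm-idem} delivers the idempotency of $\overline{a^r b}$.

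The principal obstacle is the structural claim $J \subseteq \{0, 1, \ldots, m-1\}$: unpacking the SFA condition ``every cycle visits $q_0$'' into concrete combinatorial information about the $\overline{b}$-preimages of $q_m$. Once this is in place, the closed-form choice $r = \max(J) + (n - m)$ and its verification are essentially bookkeeping.
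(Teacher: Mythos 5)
Your proof is correct, and it takes a more direct route than the paper's. The paper starts from the \emph{least} element $t$ of your set $J$ (so that $q_0\overline{a^tb}=q_m$), then iteratively passes from $\overline{a^{i(n-m)+t}b}$ to $\overline{a^{(i+1)(n-m)+t}b}$ -- each step preserving the fixed point at $q_m$ -- and invokes Proposition \ref{c4.p.number} to guarantee that after finitely many steps the exponent lands in $[m,n)$, at which point the fixed point at $q_0$ is forced; you instead jump straight to the closed-form choice $r=\max(J)+(n-m)$ and verify both fixed-point equations of Remark \ref{c4.r.q0qm-idem} in one computation, which dispenses with the termination argument and with Proposition \ref{c4.p.number} altogether. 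Both arguments rest on the same structural fact, namely that the set $J$ of $\overline{b}$-preimages of $q_m$ is a nonempty subset of $\{0,\dots,m-1\}$: nonempty because the unique $\overline{a}$-preimage $q_{m-1}$ alone cannot account for the indegree of the bpi $q_m$, and contained in $\{0,\dots,m-1\}$ because a $b$-arc from $q_j$ with $j\ge m$ back to $q_m$ would close a cycle avoiding $q_0$. The paper states only the existence of one such $j<m$ and then silently uses the full containment in its final step (where it concludes $q_{k(n-m)+t}\overline{b}=q_0$ from $m\le k(n-m)+t<n$); you make that containment explicit and prove it, which is a genuine improvement in rigor. A minor remark: your case split on whether $r\ge m$ is unnecessary, since $r=\max(J)+(n-m)>\max(J)$ already forces $r\notin J$.
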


\begin{proof}
Since $q_m$ is the bpi of $\au$, there exists $j$ (with $0 \le j < m$) such that $q_j \overline b = q_m$. Let $t$ (with $0\le t < m$) be the least number such that $q_t \overline b = q_m$ so that $q_0 \overline {a^t b} = q_m$. Consequently, as $q_m \overline{a^{n-m}} = q_0$, we have \[q_m \overline{a^{n-m+t}b} = q_m.\]
If $q_0 \overline{a^{n-m+t}b} = q_0$, then choose $r = n-m+t$ and by Remark \ref{c4.r.q0qm-idem}, the function $\overline{a^r b}$ is an idempotent of rank two in $M(\au)$. Otherwise, since the letter $b$ is suffix of word $a^{n-m+t}b$, we have $q_0 \overline{a^{n-m+t}b} = q_m$. Then \[q_m \overline{a^{2(n-m)+t}b} = q_m.\] If $q_0 \overline{a^{2(n-m)+t}b} = q_0$, then choose $r = 2(n-m)+t$ and again by Remark \ref{c4.r.q0qm-idem}, the function $\overline{a^r b}$ is an idempotent of rank two in $M(\au)$. Otherwise, since the letter $b$ is suffix of word $a^{2(n-m)+t}b$, we have $q_0 \overline{a^{2(n-m)+t}b} = q_m$. Then \[q_m  \overline{a^{3(n-m)+t}b}  = q_m.\]
As long as we continue this process, in each $i^{th}$ step, we have $q_m  \overline{a^{i(n-m)+t}b}  = q_m.$
Note that, by  Proposition \ref{c4.p.number}, there exists a natural number $k$ such that \break$m \leq k(n-m)+t < n$. If the above process terminates with a number $r$ before $k^{th}$ step, then we are through.  Otherwise, in the $k^{th}$ step, we have $q_m  \overline{a^{k(n-m)+t}b}  = q_m.$ Moreover, since $m \leq k(n-m)+t < n$, \[q_0 \overline {a^{t +k(n-m)}b} = q_{k(n-m)+t}\overline{b} = q_0.\] Thus, choose $r = k(n-m) + t$, and hence by Remark \ref{c4.r.q0qm-idem}, the function $\overline{a^r b}$ is an idempotent of rank two in $M(\au)$.
\end{proof}

\begin{notation}
The number $k(n-m) + t$ obtained in Lemma \ref{c4.l.akb} is always denoted by $\kappa$ so that $\overline {a^\kappa b}$ is an idempotent of rank two in $M(\au)$.
\end{notation}

\begin{lemma}\label{c4.l.1b1}\
\begin{enumerate}
\item[\rm(i)] If $q_0\overline b \neq q_0$, then $\overline b^2$ is an idempotent of rank two in $M(\au)$.
\item[\rm(ii)] If $q_0\overline b = q_0$, then there exists $t$ (with $1 \leq t < m$) such that the function $(\overline {a^t b})^2$ is an idempotent of rank two in $M(\au)$.
\end{enumerate}
\end{lemma}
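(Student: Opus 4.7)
\emph{Plan.} My plan is to apply Remark~\ref{c4.r.q0qm-idem} in both parts: to verify an element $\alpha \in M(\au)$ is an idempotent of rank two it suffices to show $Q\alpha \subseteq \{q_0, q_m\}$ together with $q_0\alpha = q_0$ and $q_m\alpha = q_m$. A useful preliminary is that $q_m \overline b = q_0$: indeed, $q_m\overline b \in Q\overline b = \{q_0, q_m\}$, and the equality $q_m\overline b = q_m$ would give a nontrivial self-loop at $q_m$, contradicting the SFA property that every cycle visits $q_0$.

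\emph{Part (i).} The hypothesis $q_0\overline b \ne q_0$ together with $q_0\overline b \in \{q_0, q_m\}$ forces $q_0\overline b = q_m$. So $\overline b$ swaps $q_0$ and $q_m$, whence $\overline b^{\,2}$ fixes both; since $Q\overline b^{\,2} = \{q_0, q_m\}\overline b = \{q_0, q_m\}$, Remark~\ref{c4.r.q0qm-idem} yields that $\overline b^{\,2}$ is an idempotent of rank two.

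\emph{Part (ii).} Assume $q_0\overline b = q_0$. Since $q_0\overline b = q_0 \ne q_m$ and $q_m\overline b = q_0 \ne q_m$, any state $q_j$ with $q_j\overline b = q_m$ satisfies $j \in [1, m)\cup(m, n)$. I first show $[1, m)$ must contain such an index: for any $j$ with $m < j < n$ and $q_j \overline b = q_m$, the path $q_m \to q_{m+1} \to \cdots \to q_j \to q_m$ of label $a^{j-m} b$ is a cycle at $q_m$ avoiding $q_0$, contradicting the SFA property. Now let $t := \min\{\, j \in [1, m) : q_j\overline b = q_m\,\}$ and $s := (m+t) \bmod n$; I claim $q_s\overline b = q_0$. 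If $m+t < n$ then $s = m+t \in (m, n)$, and the alternative $q_s\overline b = q_m$ would again give the SFA-violating cycle $q_m \to \cdots \to q_s \to q_m$ of label $a^t b$. If instead $m+t \ge n$ then $s = m+t-n < t$ lies in $[0, m)$, and the alternative $q_s\overline b = q_m$ would either contradict $q_0\overline b = q_0$ (when $s=0$) or contradict the minimality of $t$ (when $s \in [1, t)$). Setting $\alpha := \overline{a^t b}$ we then have $q_0\alpha = q_t\overline b = q_m$ and $q_m\alpha = q_s\overline b = q_0$, so $q_0\alpha^{\,2} = q_0$ and $q_m\alpha^{\,2} = q_m$. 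Since $Q\alpha \subseteq \{q_0, q_m\}$, also $Q\alpha^{\,2} \subseteq \{q_0, q_m\}$, with equality forced by the two fixed-point equations; Remark~\ref{c4.r.q0qm-idem} then yields $(\overline{a^t b})^2$ idempotent of rank two.

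\emph{Main obstacle.} The delicate step is in part (ii): choosing $t$ so that simultaneously $q_t\overline b = q_m$ and $q_{(m+t)\bmod n}\overline b = q_0$. The argument combines two appeals to the SFA cycle property --- once to rule out $b$-preimages of $q_m$ outside $[1, m)$, and once to rule out $q_s\overline b = q_m$ when $m+t < n$ --- with a minimality argument for the wrap-around case $m+t \ge n$. Getting the modular arithmetic consistent across these subcases, in particular at the boundary $s=0$, is where most of the care is needed.
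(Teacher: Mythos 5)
Your proof is correct and follows essentially the same route as the paper: part (i) is verbatim the paper's argument, and in part (ii) you choose the same minimal $t \in [1,m)$ with $q_t\overline b = q_m$ and establish the same key claim $q_m\overline{a^t b} = q_0$ using the SFA cycle property together with the minimality of $t$. The only cosmetic difference is that the paper proves that claim by supposing $q_m\overline{a^t b} = q_m$ and decomposing the resulting cycle through $q_0$ into $a^{t_1}$ and $a^{t_2}b$ with $t_2 < t$, whereas you compute $q_m\overline{a^t} = q_{(m+t)\bmod n}$ directly and case-split on whether the index wraps past $n$; the underlying ingredients are identical.
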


\begin{proof}
We know that $q_m\overline b = q_0$.

(i) Since $q_0\overline b \neq q_0$, we have $q_0 \overline b = q_m$. Consider $Q\overline b^2 = (Q\overline b)\overline b = \{q_0, q_m\}\overline b = \{q_0, q_m\}$. Also, since $q_0 \overline b^2 = q_0$ and $q_m \overline b^2 = q_m$, by Remark \ref{c4.r.q0qm-idem}, the function $\overline b^2$ is an idempotent of rank two in $M(\au)$.

(ii) Since $q_0\overline b = q_0$, the state $q_1$ is not a bpi. Therefore, $1 < m < n$. Further, there exists $j$ (with $0 < j < m$) such that $q_j \overline b = q_m$. Let $t$ (with $1\le t < m$) be the least number such that $q_t \overline b = q_m$ so that $q_0 \overline {a^t b} = q_m$. We claim that $q_m\overline {a^t b} = q_0$.

On the contrary, assume that $q_m\overline {a^t b} \ne q_0$. Then, $q_m\overline {a^t b} = q_m$ so that there is a cycle from $q_m$ to $q_m$ labeled by $a^tb$. Since $\au$ is an SFA, the cycle should pass through $q_0$.  Since $q_0\overline b = q_0$, there exist $t_1$ and $t_2$ ($1 \leq t_1, t_2 < t$) with $t_1 + t_2 = t$ such that\[q_m \overline {a^{t_1}} = q_0 \;\;\mbox{and}\;\; q_0 \overline{a^{t_2}b} = q_m.\] Note that $q_0 \overline{a^{t_2}b} = q_{t_2}\overline{b} = q_m$. This contradicts the choice of $t$, as $t_2 < t$. Thus, $q_m\overline {a^t b} = q_0$.

Now, observe that $Q (\overline {a^t b})^2 = (Q \overline {a^t b})\overline {a^t b}  = \{q_0, q_m\} \overline{a^t b} = \{q_0, q_m\}$. Further, $q_0 (\overline {a^t b})^2= q_0$ and $q_m (\overline {a^t b})^2 = q_m$. By Remark \ref{c4.r.q0qm-idem}, the function $(\overline{a^t b})^2$ is an idempotent of rank two in $M(\au)$.
\end{proof}

\begin{notation}
In this section, $\tau$ always denotes the number obtained in Lemma \ref{c4.l.1b1}(ii). That is, if $q_0\overline b = q_0$, $\tau$ is the least number such that $q_\tau\overline b = q_m$, so that $(\overline {a^\tau b})^2$ is an idempotent of rank two in $M(\au)$.
\end{notation}

In view of Remark \ref{c4.p.ndie}, we have the following corollary of Lemma \ref{c4.l.akb} and Lemma \ref{c4.l.1b1}.

\begin{corollary}For $1\le i \le n$,
\begin{enumerate}
\item[\rm(i)] $\overline {a^i (a^\kappa b) a^{n-i}}$ is an idempotent of rank two in $M(\au)$.
\item[\rm(ii)]  If $q_0\overline b \neq q_0$, then $\overline {a^i b^2 a^{n-i}}$ is an idempotent of rank two in $M(\au)$.
\item[\rm(iii)] If $q_0\overline b = q_0$, then $\overline {a^i(a^\tau b)^2 a^{n-i}}$ is an idempotent of rank two in $M(\au)$.
\end{enumerate}
\end{corollary}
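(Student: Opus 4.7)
The corollary is essentially a two-line consequence of the preceding material, so my proposal is just to spell out the book-keeping cleanly.

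The plan is to apply Remark \ref{c4.p.ndie} verbatim to the three idempotents already in hand: $\overline{a^\kappa b}$ from Lemma \ref{c4.l.akb}, and either $\overline b^2$ or $(\overline{a^\tau b})^2$ from Lemma \ref{c4.l.1b1}, according to which case of $q_0\overline b$ occurs. Setting $x$ equal to each of these words in turn, the remark immediately tells us that $\overline{a^i x a^{n-i}}$ is an idempotent for every $i$ with $1 \le i \le n$, which is exactly items (i), (ii), (iii).

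The only thing that still has to be checked is that the rank stays equal to two. For this I would observe that since $\overline a$ generates the cyclic group $G$ of order $n$, we have $\overline{a^n} = \overline{\varepsilon}$ and $\overline{a^{n-i}} = (\overline{a^i})^{-1}$ in $G$. Hence $\overline{a^i x a^{n-i}} = \overline{a^i}\,\overline x\,(\overline{a^i})^{-1}$ is a conjugate of $\overline x$ by a permutation of $Q$, and conjugation by a bijection preserves the cardinality of the image set. Thus $\mathrm{rank}(\overline{a^i x a^{n-i}}) = \mathrm{rank}(\overline x) = 2$ in each of the three cases.

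There is no real obstacle here; the work has been done in the earlier lemmas and in Remark \ref{c4.p.ndie}. The only micro-point worth mentioning is that in case (ii) one should first note that $\overline b^2$ really is an element of $M(\au)$ of rank two (granted by Lemma \ref{c4.l.1b1}(i)) before feeding it into the remark, and similarly for $(\overline{a^\tau b})^2$ in case (iii) via Lemma \ref{c4.l.1b1}(ii); once that is acknowledged, the statement follows by a one-line invocation of Remark \ref{c4.p.ndie} together with the rank-preservation observation above.
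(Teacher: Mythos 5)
Your proposal is correct and matches the paper's intent exactly: the paper gives no separate proof, simply asserting the corollary ``in view of Remark \ref{c4.p.ndie}'' applied to the idempotents produced by Lemma \ref{c4.l.akb} and Lemma \ref{c4.l.1b1}. Your additional observation that $\overline{a^i x a^{n-i}}$ is a conjugate of $\overline x$ by the permutation $\overline{a^i}$, hence of the same rank, is a correct (and welcome) filling-in of a detail the paper leaves implicit.
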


\begin{definition}
We call the following list of $2n + 2$ idempotents, if they exist, in $M(\au)$ as the \emph{basic idempotents}. The set of all the basic idempotents in $M(\au)$ is denoted by $\mathfrak{B}$.
\begin{enumerate}
\item[\rm(i)] The idempotent $\overline{\varepsilon}$.
\item[\rm(ii)] The idempotent whose image set is $\{q_0\}$, denoted by $\overline{\nu}$.
\item[\rm(iii)] For $1\leq i\leq n$, the idempotent $\overline{a^i (a^\kappa b) a^{n-i}}$.
\item[\rm(iv)] For $1\leq i\leq n$, if $q_0 \overline b \neq q_0$, then the idempotent $\overline{a^i b^2 a^{n-i}}$;
 else, the idempotent $\overline{a^i (a^\tau b)^2 a^{n-i}}$.
\end{enumerate}
\end{definition}

\begin{remark}\label{c4.r.2n1}
Clearly, $|\mathfrak{B}| \leq 2(n + 1)$.
\end{remark}

The following example shows that the cardinality of the set of basic idempotents is not necessarily $2(n+1)$.

\begin{example}
Note that the automaton given in \textsc{Figure} \ref{c4.f.basic-idem} is a CSFA in which $q_0$ and  $q_2$ are the bpis. For this CSFA, it can be observed that $\overline b^2 = \overline{a^\kappa b}$, where $\kappa = 2$. Hence, $|\mathfrak{B}| < 2(n + 1)$

\begin{figure}[htp]
\entrymodifiers={++[o][F-]} \SelectTips{cm}{}
\[\xymatrix{*\txt{} & *++[o][F=]{q_0} \ar[dr]^a \ar@/^0.7pc/[dd]^b *\txt{}\\
q_3 \ar[ur]^a \ar@/^1.5pc/[ur]^b & *\txt{} & q_1\ar[dl]^a \ar@/^1.5pc/[dl]^b\\
*\txt{} & q_2 \ar[ul]^a \ar@/^0.7pc/[uu]^b & *\txt{}}\]
\caption{A CSFA with two bpis}
\label{c4.f.basic-idem}
\end{figure}
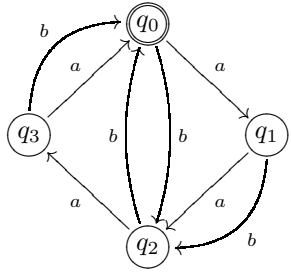
\end{example}

\subsection{Elements of rank two}

In this subsection, we obtain a representation of the elements of rank two in $M(\au)$. Here, we recall the definition of the complement of a function of rank two from \cite{kraw05}.

\begin{definition}
Let $X$ be a nonempty finite set and $\alpha$ a function on $X$ such that $X\alpha = \{i,j\}$. The \emph{complement} of $\alpha$ is the function $\alpha^\#$ defined by, for $k \in X$,
\begin{equation*}
k \alpha^\# =
\begin{cases}
i & \text{if $k \alpha = j$};\\
j & \text{if $k \alpha = i$.}
\end{cases}
\end{equation*}
\end{definition}

The following lemma is useful in the sequel.

\begin{lemma}\label{c4.l.conj-elem}\
\begin{enumerate}
\item[\rm(i)] If $q_0\overline b \neq q_0$, then $\overline b^\# = \overline b^2$.
\item[\rm(ii)]If $q_0\overline b = q_0$, then $\overline b^\# =\overline{ba^\tau b}$.
\end{enumerate}
\end{lemma}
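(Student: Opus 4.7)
Here is the plan.

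\medskip

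The starting point is the structural fact (already recorded in the proof of Lemma \ref{c4.l.1b1}) that $q_m\overline b = q_0$. Together with $Q\overline b = \{q_0,q_m\}$, this means $\overline b$ sends every state to $q_0$ or $q_m$ and that $q_0,q_m$ are themselves shuffled among $\{q_0,q_m\}$. By the definition of the complement, it suffices, for each part, to check that the proposed word sends each $q\in Q$ to $q_m$ exactly when $q\overline b=q_0$, and to $q_0$ exactly when $q\overline b=q_m$. Since $\overline b$ has image of size two and the proposed word begins with $\overline b$ (in (i) trivially, in (ii) explicitly), this reduces the whole verification to evaluating the proposed function at the two states $q_0$ and $q_m$.

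\medskip

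For part (i), the hypothesis $q_0\overline b\neq q_0$ forces $q_0\overline b=q_m$. Combined with $q_m\overline b=q_0$, a one-line computation shows that $\overline b^{\,2}$ interchanges $q_0$ and $q_m$ on the image $\{q_0,q_m\}$, which is exactly what $\overline b^{\,\#}$ does; so $\overline b^{\,2}=\overline b^{\,\#}$.

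\medskip

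For part (ii), I compute $\overline{ba^{\tau}b}$ on $\{q_0,q_m\}$. The easy side is $q_0\overline{ba^{\tau}b}=q_0\overline{a^{\tau}b}=q_{\tau}\overline b=q_m$, which is the defining property of $\tau$. The harder side is $q_m\overline{ba^{\tau}b}=q_0\overline{a^{\tau}b}\cdot\ldots$—wait, more carefully, using $q_m\overline b=q_0$ reduces it to $q_0\overline{a^{\tau}b}=q_m$; but $\overline b^{\,\#}$ should send $q_m$ (which under $\overline b$ goes to $q_0$) to $q_m$. Hence the required equality reduces, via $q_m\overline b=q_0$, directly to the formula $q_0\overline{a^{\tau}b}=q_m$, which we already have. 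So both values check out, yielding $\overline{ba^{\tau}b}=\overline b^{\,\#}$.

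\medskip

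The main obstacle I anticipate is only notational: making sure that the minimal-cycle/SFA property is not needed here beyond what is already encoded in ``$q_m\overline b=q_0$'' and the defining property of $\tau$. One has to resist the temptation to reason about $q_m\overline{a^{\tau}}$ (which would require an index analysis modulo $n$ and a reuse of the SFA cycle argument from Lemma \ref{c4.l.1b1}); the clean route is to apply $\overline b$ first and exploit $q_m\overline b=q_0$ before applying $\overline{a^{\tau}b}$. This reorganisation keeps the proof to a short, case-free computation on the two-element set $\{q_0,q_m\}$, mirroring the structure of part (i).
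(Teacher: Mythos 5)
Part (i) is fine and matches the paper. Part (ii) contains a genuine gap. Your claimed reduction --- that it suffices to evaluate the proposed function at the two states $q_0$ and $q_m$ --- is not valid here. What the definition of the complement actually requires is: for every $q$ with $q\overline b = q_0$, the tail must send $q_0$ to $q_m$, \emph{and} for every $q$ with $q\overline b = q_m$, the tail must send $q_m$ to $q_0$. In other words you must evaluate $\overline{a^\tau b}$ at \emph{both} $q_0$ and $q_m$. But in case (ii) we have $q_0\overline b = q_0$ and $q_m\overline b = q_0$, so both of your two computations $q_0\overline{ba^\tau b}$ and $q_m\overline{ba^\tau b}$ collapse to the single value $q_0\overline{a^\tau b}$; you never exercise $q_m\overline{a^\tau b}$ at all. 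Yet states $q$ with $q\overline b = q_m$ do exist (e.g.\ $q_\tau$ itself), and for such $q$ one has $q\overline{ba^\tau b} = q_m\overline{a^\tau b}$, which must equal $q_0$ for the lemma to hold. Agreement of two functions at $q_0$ and $q_m$ does not imply agreement on all of $Q$.

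The fact $q_m\overline{a^\tau b} = q_0$, which you explicitly declare one should ``resist the temptation'' to prove, is therefore indispensable. It is exactly what the paper's proof invokes (citing Lemma \ref{c4.l.1b1}(ii), where it is established via the minimality of $\tau$ and the SFA cycle argument). Alternatively, you could derive it cheaply from the \emph{statement} of Lemma \ref{c4.l.1b1}(ii): since $Q\overline{a^\tau b} = \{q_0,q_m\}$ and $q_0\overline{a^\tau b} = q_m$, if $q_m\overline{a^\tau b}$ were also $q_m$ then $(\overline{a^\tau b})^2$ would have rank one, contradicting that it is an idempotent of rank two. Either way, this step must appear; as written, your argument does not establish the value of $\overline{ba^\tau b}$ on the nonempty class of states mapped to $q_m$ by $\overline b$.
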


\begin{proof}
We recall that $q_m\overline b = q_0$ and $Q\overline b = \{q_0, q_m\}$. Note that, for $q \in Q$, either $q \overline b = q_0$ or $q \overline b = q_m$.
\begin{itemize}
\item[(i)] Since $q_0\overline b \neq q_0$, we have $q_0\overline b = q_m$. Let $q \in Q$. If $q \overline b = q_0$, then \[q \overline b^2 = (q\overline b)\overline b = q_0 \overline b = q_m.\]
Else, \[q \overline b^2 = (q\overline b)\overline b = q_m \overline b = q_0.\] Hence, $\overline b^\# = \overline b^2$.

\item[(ii)] Given $q_0\overline b = q_0$. Let $q \in Q$. If $q \overline b = q_0$, then  \[q \overline{ba^\tau b} = (q\overline b)\overline{a^\tau b} = q_0\overline{a^\tau b} = q_m\] (cf. Lemma \ref{c4.l.1b1}(ii)). Else, \[q \overline{ba^\tau b} = (q\overline b)\overline{a^\tau b} = q_m \overline{a^\tau b} = q_0.\] Hence, $\overline b^\# = \overline{b a^\tau b}$.
\end{itemize}
\end{proof}

\begin{theorem}\label{c4.t.rank2form}
Any element of rank two in $M(\au)$ has one of the following forms.
\begin{enumerate}
\item[$(\beta)$] $\overline{a^i b a^j}$
\item[$(\gamma)$] $\overline{a^i b^2 a^j}$
\item[$(\delta)$] $\overline{a^i b a^\tau b a^j}$
\end{enumerate}
Here, $i,j\in \{1,\ldots, n\}$.
\end{theorem}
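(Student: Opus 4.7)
The plan is to take any rank-$2$ element $\overline w \in M(\au)$, write the defining word as $w = a^{i_0} b a^{i_1} b \cdots b a^{i_k}$ with $k \ge 1$ (since $\overline w$ is not a permutation, $w$ must contain at least one $b$), and prove the claim by induction on $k$. The base case $k = 1$ gives $\overline w = \overline{a^{i_0} b a^{i_1}}$, which is form $(\beta)$; since $\overline{a^n} = \overline{\varepsilon}$, exponents in $\{0, \ldots, n-1\}$ can be shifted into $\{1, \ldots, n\}$ to match the stated constraint.

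The crucial preliminary step is to classify the ``double-$b$'' factors $\overline{b a^s b}$. Parameterizing $\overline{b}$ by $\psi : \{0, 1, \ldots, n-1\} \to \{0, 1\}$ via $q_i \overline{b} = q_{m\psi(i)}$, a direct computation yields $q_i \overline{b a^s b} = q_{m\psi(m\psi(i)+s)}$; whenever this function has rank $2$, one checks that $\overline{b a^s b}$ equals $\overline b$ (when $\psi(s)=0$, $\psi(m+s)=1$) or $\overline{b}^{\#}$ (when $\psi(s)=1$, $\psi(m+s)=0$). Lemma~\ref{c4.l.conj-elem} identifies $\overline{b}^{\#}$ with $\overline{b^2}$ in case (i) and with $\overline{b a^{\tau} b}$ in case (ii). Since a rank-$1$ factor in the middle of $w$ would force $\overline w$ itself to be rank $1$, every inner factor $\overline{b a^{i_j} b}$ (for $1 \le j \le k-1$) must be either $\overline b$ or $\overline{b}^{\#}$.

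For the inductive step ($k \ge 2$), set $f = a^{i_0} b a^{i_1} \cdots b a^{i_{k-2}}$ and factor $\overline w = \overline f \cdot \overline{b a^{i_{k-1}} b} \cdot \overline{a^{i_k}}$. If $\overline{b a^{i_{k-1}} b} = \overline b$, then $\overline w = \overline{f b a^{i_k}}$, a word with $k-1$ $b$'s and the same leading and trailing exponents, and the induction hypothesis directly concludes. If instead $\overline{b a^{i_{k-1}} b} = \overline{b}^{\#}$, I apply the functional identity $\overline f \cdot \overline{b}^{\#} = (\overline{f b})^{\#}$, which holds because pre-composition by $\overline f$ commutes with complementing the output in the two-element image. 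Thus $\overline w = (\overline{f b})^{\#} \cdot \overline{a^{i_k}}$; the word $fb$ has $k-1$ $b$'s, so the induction hypothesis gives $\overline{f b}$ as $\overline{a^{i_0} b}$ or $(\overline{a^{i_0} b})^{\#}$. Taking the outer complement (double complements cancel) and multiplying by the bijection $\overline{a^{i_k}}$ on the right, $\overline w$ equals $\overline{a^{i_0} b a^{i_k}}$ or $(\overline{a^{i_0} b a^{i_k}})^{\#}$. In the latter case, Lemma~\ref{c4.l.conj-elem} rewrites the complement as $\overline{a^{i_0} b^2 a^{i_k}}$ (case (i), form $(\gamma)$) or as $\overline{a^{i_0} b a^{\tau} b a^{i_k}}$ (case (ii), form $(\delta)$).

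The main obstacle is the combinatorial book-keeping around the complement operation: checking the identity $\overline f \cdot \overline{b}^{\#} = (\overline{f b})^{\#}$ at each rank-$2$ stage and correctly propagating single versus double complements through the induction. Once these are handled, the three forms $(\beta)$, $(\gamma)$, $(\delta)$ emerge as precisely the ``natural'' shape together with its complement, the latter split into two expressions by the case distinction in Lemma~\ref{c4.l.conj-elem}.
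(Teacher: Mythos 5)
Your proof is correct, and it reaches the theorem by a route that differs in execution from the paper's, though both rest on the same two pillars: the observation that any rank-two composite beginning and ending in $\overline b$ must equal $\overline b$ or $\overline b^{\#}$, and Lemma~\ref{c4.l.conj-elem} to rewrite $\overline b^{\#}$ as $\overline{b^2}$ or $\overline{ba^{\tau}b}$. The paper avoids induction entirely: it factors $w = a^{i_1} b\, u\, b\, a^{i_k}$, isolating only the first and last occurrences of $b$, and argues in a single step that the whole middle factor $\overline{bub}$ --- having rank two, image $\{q_0,q_m\}$, and depending on its argument only through $\overline b$ --- is either $\overline b$ or $\overline b^{\#}$; the three forms then drop out at once. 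You instead induct on the number of occurrences of $b$, peeling off one factor $\overline{b a^{s} b}$ at a time, which obliges you to establish and propagate the auxiliary identity $\overline f\;\overline b^{\#} = (\overline{fb})^{\#}$ and to track single versus double complements through the recursion. Both arguments are sound: the paper's global factorization makes the complement bookkeeping unnecessary, while your version makes explicit the closure of $\{\overline b, \overline b^{\#}\}$ under the relevant compositions, which is only implicit in the paper. One small imprecision to note: the induction hypothesis returns $\overline{fb}$ in the form $\overline{a^{i'} b a^{j'}}$ or its complement for \emph{some} exponents $i', j'$, not necessarily with $i' = i_0$ and $j' = 0$ as you state; this does not damage the argument, since the trailing power of $a$ commutes past the complement (as $(\alpha\pi)^{\#} = \alpha^{\#}\pi$ for a permutation $\pi$) and is absorbed into $\overline{a^{i_k}}$, but the exponents in your final forms should be adjusted accordingly.
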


\begin{proof}
Note that every element of rank two in $M(\au)$ should have at least one $b$. Let $w = a^{i_1} b a^{i_2} b \ldots ba^{i_{k-1}} b a^{i_k} \in A^*$, for $i_t \geq 0$ ($t \in \{1,\ldots,k\}$), such that $\overline w$ be an arbitrary element of rank two in $M(\au)$.
Write $w = a^{i_1} b u b a^{i_k}$, where $u = a^{i_2} b \ldots ba^{i_{k-1}}$. Clearly, the function $\overline{b u b}$ has rank two with the image set $\{q_0, q_m\}$.

\begin{description}
  \item[Case-1 ($\overline{b u b} = \overline b$)] Clearly, $\overline w = \overline{a^{i_1}b u b a^{i_k}} = \overline{a^{i_1} b a^{i_k}}$, which is in the form $(\beta)$.
  \item[Case-2 ($\overline{b u b} \ne \overline b$)] First we claim that $\overline{b u b} = \overline b^\#$. Since $\overline{b u b} \ne \overline b$, there exist $p \in Q$ such that $p\overline{b u b} \ne p\overline b$. Now, we consider two subcases according to the state $p\overline{b}$.
\begin{description}
  \item[Subcase-1 ($p\overline{b} = q_0$)] Since $\overline{b u b} \ne \overline b$, we have $p\overline{b u b} = q_m$. Consequently, $$q_0\overline{u b} = q_m.$$ Let $q \in Q$ be an arbitrary element. Then, either $q\overline b = q_0$ or $q\overline b = q_m$.

If $q\overline b = q_0$, then \[q\overline{b u b} = (q\overline b) \overline{u b} = q_0\overline{u b} = q_m.\] Else ($q\overline b = q_m$),
$q\overline{b u b} = (q\overline b) \overline{u b} = q_m\overline{u b}.$ To show the last term is equal to $q_0$, let us assume the contrary. That is, assume $q_m\overline{u b} \ne q_0$. Then, $q_m\overline{u b} = q_m$. Consequently, \[Q\overline{bub} = (Q\overline b)\overline{ub} = \{q_0, q_m\}\overline{ub} = \{q_m\}.\] This is a contradiction to $\overline{bub}$ is of rank two. Thus, if $q\overline b = q_m$, then $q\overline{b u b} = q_0$. Hence, $\overline{b u b} = \overline{b}^\#$.
  \item[Subcase-2 ($p\overline{b} \ne q_0$)] One can proceed in the similar lines as in Subcase-1 and obtain that $\overline{b u b} = \overline{b}^\#$.\\
\end{description}

\noindent If $q_0\overline b \neq q_0$, then by Lemma \ref{c4.l.conj-elem}(i), we have $\overline b^\# = \overline b^2$. Consequently,
\[\overline w = \overline{a^{i_1}b u b a^{i_k}} = \overline{a^{i_1} b^2 a^{i_k}},\] which is in the form $(\gamma)$.

\noindent If $q_0\overline b = q_0$, then by Lemma \ref{c4.l.conj-elem}(ii), we have $\overline b^\# = \overline{b a^\tau b}$. Consequently,
\[\overline w = \overline{a^{i_1}b u b a^{i_k}} = \overline{a^{i_1} b a^\tau b a^{i_k}},\] which is in the form $(\delta)$.
\end{description}
\end{proof}

\subsection{Representation of $M(\au)$}

In this subsection, we give a canonical representation of the elements of $M(\au)$ in terms of basic idempotents and circular permutation.

\begin{theorem}\label{c4.t.cano-form}
Every element of $M(\au)$ can be written as a composition of a basic idempotent and a permutation, i.e.
\[M(\au) = \mathfrak{B}G = \Big\{\overline{e} \;\overline{g}\;\Big|\;\overline{e} \in \mathfrak{B}\;\mbox{and}\; \overline{g} \in G  \Big\}.\]
\end{theorem}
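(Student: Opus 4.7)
The plan is to partition the elements of $M(\au)$ by rank and treat each case separately. By Theorem \ref{t.pre-pap}(ii), every $\overline w \in M(\au)$ has rank $1$, $2$, or $n$, since $|BPI(\au)|=2$.

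If $\text{rank}(\overline w) = n$, then $\overline w$ is a permutation, so by the earlier proposition $\overline w \in G$, and I write $\overline w = \overline{\varepsilon}\,\overline w$ with $\overline{\varepsilon} \in \mathfrak{B}$. If $\text{rank}(\overline w) = 1$, then $\overline w$ is constant with image $\{q_k\}$ for some $k$ with $0 \le k < n$; in particular a rank-one idempotent exists, so $\overline{\nu} \in \mathfrak{B}$. Since $\overline{\nu}$ sends everything to $q_0$ and $q_0\overline{a^k}=q_k$, the function $\overline{\nu}\,\overline{a^k}$ is the constant function to $q_k$, which equals $\overline w$.

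The substantive case is $\text{rank}(\overline w) = 2$, which is where Theorem \ref{c4.t.rank2form} does all the work. That theorem pins $\overline w$ down to one of three shapes $\overline{a^i b a^j}$, $\overline{a^i b^2 a^j}$, or $\overline{a^i b a^\tau b a^j}$ with $i,j \in \{1,\ldots,n\}$, and the point is that each shape matches, up to a shift by a power of $\overline{a}$, the corresponding basic idempotent of type (iii) or (iv). Concretely, for form $(\beta)$ I would solve $s + \kappa \equiv i \pmod n$ and $n - s + t \equiv j \pmod n$ for $s \in \{1,\ldots,n\}$ and $t \in \{0,\ldots,n-1\}$; since $\overline{a}$ has order $n$ this is always solvable, and then
\[\overline{a^s(a^\kappa b) a^{n-s}}\,\overline{a^t} = \overline{a^{s+\kappa} b a^{n-s+t}} = \overline{a^i b a^j}.\]
For form $(\gamma)$, which only occurs when $q_0\overline b \neq q_0$, I use the basic idempotent $\overline{a^s b^2 a^{n-s}}$ and solve $s \equiv i$, $n - s + t \equiv j \pmod n$. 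For form $(\delta)$, which only occurs when $q_0\overline b = q_0$, I use $\overline{a^s(a^\tau b)^2 a^{n-s}}$ and solve $s + \tau \equiv i$, $n - s + t \equiv j \pmod n$.

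The only genuine obstacle is ensuring that the chosen basic idempotent really lies in $\mathfrak{B}$ for the case at hand, i.e. that I am not invoking a type-(iv) idempotent of the wrong form. This is handled by the case split in the definition of $\mathfrak{B}$ (which mirrors the case split in Theorem \ref{c4.t.rank2form}): whether $q_0\overline b = q_0$ decides both which of $(\gamma)$ or $(\delta)$ can occur and which of the two type-(iv) idempotents belongs to $\mathfrak{B}$, so the matching is forced and consistent. Everything else reduces to straightforward modular arithmetic in the cyclic group $G = \langle \overline a \rangle$ of order $n$.
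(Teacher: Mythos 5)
Your proposal is correct and follows essentially the same route as the paper's proof: split by rank using Theorem \ref{t.pre-pap}(ii), handle ranks $1$ and $n$ via $\overline{\nu}$ and $\overline{\varepsilon}$ respectively, and for rank $2$ invoke Theorem \ref{c4.t.rank2form} and shift exponents modulo $n$ to realize each form as a basic idempotent composed with a power of $\overline{a}$. The congruences you solve for $s$ and $t$ are exactly the residue computations $i'=(i-\kappa)\bmod n$, $j'=(j+i-\kappa)\bmod n$, etc., that the paper carries out directly.
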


\begin{proof}
For $x \in A^*$, by Theorem \ref{t.pre-pap}(ii), we have  $\mbox{rank}(\overline x)\in \{1,2,n\}$. If $\mbox{rank}(\overline x) = 1$, then the function $\overline x$ is an idempotent (being a constant function). Therefore, there exists $i$ (with $1 \le i \le n$) such that
\[\overline x = \overline{\nu}  \;\overline{a^i} \in \mathfrak{B}G.\]
If $\mbox{rank}(\overline x) = n$, then the function $\overline x$ is a permutation of the form $\overline x = \overline{a^i}$, for some $i$ (with $1 \le i \le n$). Clearly, $\overline x \in G$ so that \[\overline x = \overline{\varepsilon}\;\overline x  \in \mathfrak{B}G.\]
If $\mbox{rank}(\overline x) = 2$, then, by Theorem \ref{c4.t.rank2form}, $\overline x = \overline{a^i b a^j}$ or $\overline x = \overline{a^i b^2 a^j}$ or $\overline x = \overline{a^i b a^\tau b a^j}$, for some $i, j \in \{1, \ldots, n\}$.\\

\begin{itemize}
\item[If $\overline x  =$] $\overline{a^i b a^j}$,  then
\[\overline x  = \overline{a^{i-\kappa} (a^\kappa b) a^j}= \overline{a^{i-\kappa} (a^\kappa b)a^{n-(i-\kappa)}}\;\; \overline{a^{j+(i-\kappa)}}= \overline{a^{i'} (a^\kappa b)a^{n-i'}} \;\;\overline{a^{j'}},\] where $i'$ and $j'$ are, respectively, the residues of $(i-\kappa)$ and $(j+i-\kappa)\mod n$. Consequently, $\overline x \in \mathfrak{B}G$.\\

\item[If $\overline x  =$] $\overline{a^i b^2 a^j}$, then \[\overline x = \overline{a^{i} b^2 a^{n-i}}\;\;\overline{a^{j-(n-i)}} = \overline{a^{i} b^2 a^{n-i}}\;\;\overline{a^{j'}},\] where $j'$ is the residue of $(j+i-n)\mod n$. Consequently, $\overline x \in \mathfrak{B}G$.\\

\item[If $\overline x  =$] $\overline{a^i b a^\tau b a^j}$, then
\[\overline x =  \overline{a^{i-\tau} (a^\tau b)^2  a^j} = \overline{a^{i-\tau} (a^\tau b)^2 a^{n-(i-\tau)}}\;\; \overline{a^{j-n+(i-\tau)}} = \overline{a^{i'} (a^\tau b)^2 a^{n-i'}}\;\; \overline{a^{j'}},\]
where $i'$ and $j'$ are, respectively, the residues of $(i-\tau)$ and $(j+i-\tau -n)\mod n$. Consequently, $\overline x \in \mathfrak{B}G$.
\end{itemize}
Thus, in all the cases the function $\overline x \in M(\au)$ can be written as a composition of a basic idempotent and a permutation in $G$. Hence,
$M(\au) = \mathfrak{B}G$.
\end{proof}

\subsection{An example}

Consider the CSFA $\au' = (Q, 1,1,\mathcal{F})$ over $A = \{a,b\}$ with $Q = \{1,2,\ldots, n\}$, and the transitions are given in the following table.
\[\begin{array}{c|cccccc}
      \mathcal{F} & 1 & 2 & 3 & \cdots & n-1 & n \\
\hline
      a & 2 & 3 & 4 & \cdots & n & 1 \\
      b & 2 & 1 & 1 & \cdots & 1 & 1 \\
    \end{array}\]
Clearly, the input letters $a$ and $b$ induces the functions $\overline a$ and $\overline b$ on $Q$, respectively, given as
\[\overline a = \left(
  \begin{array}{cccccc}
    1 & 2 & 3 & \cdots & n-1 & n \\
    2 & 3 & 4 & \cdots & n & 1 \\
  \end{array}
\right)\mbox{and}\;
\overline b =
\left(
\begin{array}{cccccc}
1 & 2 & 3 & \cdots & n-1 & n \\
2 & 1 & 1 & \cdots & 1 & 1 \\
\end{array}
\right).\]
One can observe that $Q \overline{bab} = \{1\}$. Therefore, the function $\overline{bab}$ is the constant function $\overline \nu$ in $M(\au')$.  Further, we observe that $\kappa = n-1$ and the functions
\[\overline b^2 =
\left(
\begin{array}{cccccc}
1 & 2 & 3 & \cdots & n-1 & n \\
1 & 2 & 2 & \cdots & 2 & 2 \\
\end{array}
\right)\mbox{and}\;
\overline{a^\kappa b} = \left(
  \begin{array}{cccccc}
    1 & 2 & 3 & \cdots & n-1 & n \\
    1 & 2 & 1 & \cdots & 1 & 1 \\
  \end{array}
\right)\]
are idempotents of rank two in $M(\au')$. By Remark \ref{c4.p.ndie}, the functions $\overline{a^i b^2 a^{n-i}}$ and $\overline{a^i (a^{n-1}b) a^{n-i}}$ are basic idempotents of rank two in $M(\au')$, where $i \in \{1,2,\ldots,n\}$. Now, we pursue on the orbits of basic idempotents of rank two. In this connection, first note that, for $1 \le r \le n$,
\[\overline{a^r b} = \left(
  \begin{array}{ccccccccc}
    1 & 2  & \cdots &n-r & n-r+1 & n-r+2 &\cdots & n-1 & n \\
     1& 1  & \cdots &1 & 2 & 1 & \cdots & 1 & 1 \\
  \end{array}
\right),\]
\[\overline{a^r b^2} = \left(
  \begin{array}{cccccccccc}
    1 & 2 & \cdots &n-r & n-r+1 & n-r+2  &\cdots & n-1 & n \\
     2& 2 & \cdots & 2& 1 & 2 &\cdots & 2 & 2 \\
  \end{array}
\right).\]
For $1\le j < i\le n$, let us assume that $\mathcal{O}(\overline{a^i b^2 a^{n-i}}) \cap \mathcal{O}(\overline{a^j b^2 a^{n-j}}) \ne \varnothing$. Then, for some $t$ (with $1\le t\le n$),
\[\overline{a^i b^2 a^{n-i}} = \overline{a^j b^2 a^{n-j}} \;\;\overline{a^t} \Longrightarrow \overline{a^{i-j} b^2} = \overline{b^2 a^{i-j+t}}.\]
If $i-j+t \ne 0(\bmod \; n)$, then  $Q\overline{b^2 a^{i-j+t}}\ne \{1,2\} = Q\overline{a^{i-j} b^2}$; a contradiction.
Otherwise, we have $\overline{a^{i-j} b^2} = \overline{b^2}$. But, from the above shown $\overline b^2$ and $\overline{a^rb^2}$, we can observe that $\overline{a^{i-j} b^2} \ne \overline{b^2}$. Hence, for $1\le j < i\le n$, we have \[\mathcal{O}(\overline{a^i b^2 a^{n-i}}) \cap \mathcal{O}(\overline{a^j b^2 a^{n-j}}) = \varnothing.\]
Similarly, we can prove that, for $1\le j < i \le n$, we have \[\mathcal{O}(\overline{a^i (a^{n-1}b) a^{n-i}}) \cap \mathcal{O}(\overline{a^j (a^{n-1}b) a^{n-j}}) = \varnothing.\]
Note that $\overline b^2 \ne \overline{a^{n-1}b}$. Now, for $1\le j < i\le n$, let us assume that $$\mathcal{O}(\overline{a^i b^2 a^{n-i}}) \cap \mathcal{O}(\overline{a^j (a^{n-1}b) a^{n-j}}) \ne \varnothing.$$ Then, for some $t$ (with $1\le t\le n$), we have
\[\overline{a^i b^2 a^{n-i}} = \overline{a^j (a^{n-1}b) a^{n-j}} \;\;\overline{a^t} \Longrightarrow \overline{a^{i-j} b^2} = \overline{(a^{n-1}b) a^{i-j+t}}.\]
If $i-j+t \ne 0(\bmod \; n)$, then  $Q\overline{b^2 a^{i-j+t}}\ne \{1,2\} = Q\overline{a^{i-j} b^2}$; a contradiction.
Otherwise, we have $\overline{a^{i-j} b^2} = \overline{ a^{n-1}b}$. But, from the above shown $\overline{a^{n-1}b}$ and $\overline{a^rb^2}$, we can  observe that $\overline{a^{i-j} b^2} \ne \overline{ a^{n-1}b}$. Hence, for $1\le j < i\le n$, we have \[\mathcal{O}(\overline{a^i b^2 a^{n-i}}) \cap \mathcal{O}(\overline{a^j (a^{n-1}b) a^{n-j}}) = \varnothing .\]  Thus, all the orbits of the basic idempotents of rank two are disjoint and so all the basic idempotents of rank two are distinct. Thus, $|\mathfrak{B}| = 2(n+1)$. Consequently, $M(\au') = \mathfrak{B}G = 2n(n+1)$. Hence, the syntactic complexity of the CSFA $\au'$ is $2n(n+1)$.

\subsection{Proof of Theorem \ref{c4.t.mainres}}

Now, we prove the main Theorem \ref{c4.t.mainres}. We know that
\begin{eqnarray*}
M(\au) &=& \bigcup_{\overline x \in M(\au)}\mathcal{O}(\overline x)\\
&=& \bigcup_{\overline x \in \mathfrak{B}G}\mathcal{O}(\overline x)\; \mbox{ by using Theorem \ref{c4.t.cano-form}}\\
&=& \bigcup_{\overline x \in \mathfrak{B}}\mathcal{O}(\overline x).
\end{eqnarray*}

This implies that
\begin{eqnarray*}
|M(\au)| &\le& |\mathfrak{B}||\mathcal{O}(\overline x)|\\
&\leq& 2n(n+1) \; \mbox{ by using Proposition \ref{c4.p.sizen} and Remark \ref{c4.r.2n1}}.
\end{eqnarray*}
Thus, the sizes of syntactic monoids of the submonoids accepted by CSFA with two bpis over a binary alphabet is bounded by $2n(n+1)$, where $n$ is the state complexity of the CSFA. For the class of automata displayed in Subsection 4.3.4, the syntactic monoid size is exactly $2n(n+1)$. Hence, the syntactic complexity of CSFA with two bpis over a binary alphabet is $2n(n+1)$.

As shown in the following example, the syntactic complexity of CSFA with two bpis over an alphabet of size more than two is not $2n(n + 1)$.

\begin{example}\label{c4.e.2and3}
Consider the CSFA $\au$ over the ternary alphabet $\{a, b, c\}$ given in \textsc{Figure} \ref{c4.fig1}. Here, $BPI(\au) = \{q_0, q_3\}$.
One can compute that the syntactic complexity of $\au$ is 110.
\begin{figure}[htb]
\entrymodifiers={++[o][F-]} \SelectTips{cm}{}
\[\xymatrix{
*\txt{} & *++[o][F=]{q_0} \ar[dr]^a \ar[ddl]^{b, \ c} & *\txt{}\\
q_4 \ar[ur]^{a, \ b, \ c} & *\txt{} & q_1\ar[d]^a \ar[dll]_b \ar@/_1.5pc/[ul]_c\\
q_3 \ar[u]^a \ar@/^4pc/[uur]^{b, \ c} & *\txt{} & q_2 \ar[ll]^{a, \ b, \ c}}\]
\caption{A CSFA over ternary alphabet}
\label{c4.fig1}
\end{figure}
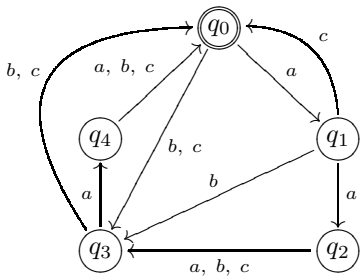
\end{example}

\section{Conclusion}

This work investigates the syntactic complexity of various classes of the submonoids accepted by CSFA, classified by their number of bpis. In fact, we showed that the syntactic complexity of CSFA with at most one bpi is linear. Further, we proved that the syntactic complexity of CSFA with two bpis over a binary alphabet is $2n(n+1)$. In that connection, we obtained a representation for the functions of rank two in the monoid of CSFA with two bpis over a binary alphabet. However, there is a lot more to investigate the syntactic complexity concerning the finitely generated submonoids of a free monoid. For instance, one can target to address the syntactic complexity of CSFA with two bpis over an arbitrary alphabet. In general, one can study the syntactic complexity of CSFA and SFA with more than two bpis.

\end{document}